\def\atauxout{\csname @auxout\endcsname}%
\def\labelii#1{\immediate\write\atauxout{%
    \noexpand\newlabel{#1}{{\theenumii}{\thepage}}}}
\newcommand{\vect}[1]{\mathbf{#1}} %vector
\newcommand{\Integers}{\mathbb Z}   %the integers
\newcommand{\Complex}{\mathbb C}    %the complex numbers
\newcommand{\bra}{\langle} %inner products 
\newcommand{\ket}{\rangle} %inner products
\newcommand{\E}[2][]{\textnormal{\textsf{E}}_{#1}\!\left[#2\right]} %expectation
\newcommand{\markov}{\textnormal{\mbox{$\multimap\hspace{-0.73ex}-\hspace{-2ex}-$}}}
\renewcommand{\d}{\,\textnormal{d}}
\newbox\measurebox %
\newlength{\firstmini} %
\newcommand{\mytextandeps}[2]{%
  \setbox\measurebox\hbox{\epsfig{file=#2}}
  \setlength{\firstmini}{\linewidth}%
  \addtolength{\firstmini}{-\wd\measurebox}%
  \addtolength{\firstmini}{-1em}%
  \begin{minipage}[t]{\firstmini}
    #1
  \end{minipage} \hfill %
  \setbox\measurebox\vbox{\unhbox\measurebox} %
  \setlength{\firstmini}{\ht\measurebox} %
  \addtolength{\firstmini}{\dp\measurebox} %
  \ht\measurebox=0pt \dp\measurebox=\firstmini %
  \box\measurebox
}%
\newtheorem{proposition}{Proposition}
\newtheorem{scheme}{Scheme}
\newcounter{sc}
\newcommand{\e}{\mathop{}\!e}
\newcommand{\ave}{\mathcal{E}}
\title{Toward Photon-Efficient Key Distribution over Optical Channels}
\author{Yuval Kochman,~\IEEEmembership{Member,~IEEE}, Ligong Wang,~\IEEEmembership{Member,~IEEE}, and Gregory
  W. Wornell~\IEEEmembership{Fellow,~IEEE}}
\begin{document}

\maketitle
\renewcommand{\thefootnote}{} % remove any footnote mark
\footnotetext[1]{This work was supported in part by the DARPA InPho
  program
 under Contract No. HR0011-10-C-0159, and by AFOSR under Grant
  No.~FA9550-11-1-0183. Y. Kochman was also supported in part by Israel Science Foundation under Grant No.~956/12.

Y. Kochman is with the School of Computer Science and Engineering, Hebrew
University of Jerusalem, Israel; he was with the Research Laboratory of Electronics, 
    Massachusetts Institute of Technology, Cambridge, MA, 
    USA (e-mail: yuvalko@cs.huji.ac.il).

L. Wang and G.W. Wornell are with the Research Laboratory of Electronics, 
    Massachusetts Institute of Technology, Cambridge, MA, 
    USA (e-mail: wlg@mit.edu; gww@mit.edu).}
\renewcommand{\thefootnote}{\arabic{footnote}}

\begin{abstract}
This work considers the distribution of a secret key over an optical (bosonic) channel in the regime of high photon efficiency, i.e., when the number of secret key bits generated per detected photon is high. While in principle the photon efficiency is unbounded, there is an inherent tradeoff between this efficiency and the key generation rate (with respect to the channel bandwidth). 
We derive asymptotic expressions for the optimal generation rates in the photon-efficient limit, and propose schemes that approach these limits up to certain approximations. The schemes are practical, in the sense that they use coherent or temporally-entangled optical states and direct photodetection, all of which are reasonably easy to realize in practice, in conjunction with off-the-shelf classical codes.
\end{abstract}

\begin{IEEEkeywords}
Information-theoretic security, key distribution, optical communication, wiretap channel.

\end{IEEEkeywords}

\section{Introduction}

\IEEEPARstart{I}{nformation-theoretic} key distribution \cite{ahlswedecsiszar93,maurer93} involves the generation of a sequence between the participating terminals, such that the mutual information between this sequence and any data obtained by other terminals is close to zero in an appropriate sense. Unlike secure communication through the wiretap channel \cite{wyner75}, the sequence need not be known \emph{a priori} to any of the terminals. Like the latter, however, the information-theoretic approach to key distribution hinges on knowledge of the channel through which an adversarial terminal listens to the communication, as opposed to computational approaches where the assumption is the inability of the adversary to perform certain computations in reasonable time. The computational hardness assumption may no longer be valid when future technology, e.g., quantum computers, becomes available, causing the computational approaches to fail. But the information-theoretic approach also has its drawback: the information obtained by the legitimate terminals cannot prove or disprove the channel assumption on which the key-distribution protocol is based, inhibiting security in a realistic setting.

The situation is much different when a quantum channel is employed \cite{bennettbrassard84,ekert91}. Loosely speaking, the ``no-cloning'' theorem \cite{wootterszurek82} guarantees that information ``stolen'' by an eavesdropper will not reach the legitimate terminal, thus the situation where the adversary is stronger than initially assumed can be detected. In fact, even eavesdroppers that can actively transmit into the quantum channel can be detected, at the cost of key-rate loss, using measurements based on local randomness. We shall come back to these issues in the discussion at the end of the paper. For the main part of the paper, we rely on the existence of good detection methods to assume that the eavesdropper is passive, and that the complete statistical characterization of the eavesdropper's channel is known to the legitimate terminals.

Two-terminal quantum key distribution (QKD) protocols can be roughly divided into two classes. In ``prepare and measure'' protocols, one legitimate terminal (Alice) prepares  quantum states that are sent via a quantum channel to the other terminal (Bob) and to the eavesdropper (Eve). In contrast, in entanglement-based protocols, a quantum source emits entangled states, which are observed by all terminals via quantum channels. These two classes are parallel to the ``C'' (channel) and ``S'' (source) models of \cite{ahlswedecsiszar93}; in this work we shall use the C/S notation. In either approach, the quantum stage is followed by the use of a classical communication channel. This channel is assumed to be public, i.e., all information sent is received by Eve; however, it is assumed that Eve cannot transmit into this public channel. The performance of a QKD scheme is measured in terms of the size of the secret key normalized by the quantum-channel resources used. The classical channel is thus ``free'', although its use is limited by the assumption that Eve has full access to this channel. 

A quantum channel most often encountered in practice is the optical channel, which is modeled in quantum mechanics as a bosonic channel. When used for communicating classical data at low average input power, it is asymptotically optimal to use a direct-detection receiver, which ignores the phase of the optical signal. This results in an equivalent classical channel where the output has a Poisson distribution whose mean is proportional to the channel's input \cite{shapiro09}. Some of the first important works on this channel model are in \cite{bardavid69,kabanov78,davis80}. 
The low-input-power regime can be thought of as a ``photon-efficient regime''. This is because, in the limit of low average photon number per channel use, the communication rate per photon is unbounded.

In this work we consider QKD over the bosonic channel in the photon-efficient regime. We consider both C and S models, and show that in both, as happens in communication, the photon efficiency is unbounded and direct-detection receivers are asymptotically optimal. We further consider specific QKD protocols. We discuss the difficulty of finding code constructions that allow us to approach the theoretical performance limits, since in the photon-efficient regime they have to operate over highly-skewed sequences. We present protocols that overcome this difficulty: in the C model we use pulse-position modulation (PPM), while in the S model we parse the sequence of detections into frames. In both cases, coding over frames is an easier task than coding directly over the detection sequence. 

The rest of the paper is organized as follows. We introduce our notation in Section~\ref{sec:notation}. In Section~\ref{sec:setting} we formally describe the problem setting. Then in Section~\ref{sec:communication} we discuss, as a point of reference, photon-efficient communication. Sections~\ref{sec:channel_case} and \ref{sec:source_case} include our main results for key distribution, regarding the C and S models, respectively. We conclude this paper in Section~\ref{sec:discussion} by discussing the gap between our results and fully quantum security proofs.

\section{Notation}\label{sec:notation}

We use a font like $\mathbb{A}$ to denote a Hilbert space. Throughout this paper we shall focus on bosonic Hilbert spaces. We adopt Dirac's notation to use $|\psi\ket$ to denote a unit vector in a Hilbert space, which can describe a pure quantum state, and use $\bra \psi|$ to denote the conjugate of $|\psi\ket$. We follow most of the physics literature to slightly abuse our notation: we shall not make typographical distinctions between number states and coherent states. Hence $|n\ket$, $n\in\Integers_0^+$, (usually) denotes the number state that contains $n$ photons; while $|\alpha\ket$, $\alpha\in\Complex$, (almost everywhere) denotes a coherent state, whose exact characterization is given later. This abuse of notation will not cause confusion within the scope of this paper. We use a Greek letter like $\rho$ to denote a density operator (i.e., a trace-one semidefinite operator) on a Hilbert space, which can describe a pure or mixed quantum state. Note that the density-operator description of a pure state $|\psi\ket$ is $|\psi\ket\bra\psi|$. When considering a system such as a beamsplitter, we reserve the letters $|\psi\ket$ and $\rho$ for input states, and $|\phi\ket$ and $\sigma$ for output states. Sometimes, to be explicit, we add a superscript to a state to indicate its Hilbert space so it looks like $|\psi\ket^\mathbb{A}$ or $\sigma^\mathbb{B}$. We use the notation $\hat{a}$ to denote the annihilation operator on $\mathbb{A}$ (so $\hat{a}^\dag$ is the creation operator on $\mathbb{A}$); similarly, $\hat{b}$ denotes the annihilation operator on $\mathbb{B}$, etc.

For a quantum state $\sigma^\mathbb{AB}$ on the Hilbert spaces $\mathbb{A}$ and $\mathbb{B}$, we use $H(\sigma^\mathbb{A})$, $H(\sigma^\mathbb{A}|\sigma^\mathbb{B})$, and $I(\sigma^\mathbb{A};\sigma^\mathbb{B})$ to denote the corresponding entropy, conditional entropy, and mutual information, respectively. These quantities are defined as follows (see \cite{nielsenchuang00} for more details):
\begin{IEEEeqnarray}{rCl}
	H(\sigma^\mathbb{A}) & \triangleq & - \textnormal{tr} \left\{ \sigma^\mathbb{A} \log \sigma^\mathbb{A} \right\}\\
	H(\sigma^\mathbb{A}| \sigma^\mathbb{B}) & \triangleq & H(\sigma^{\mathbb{AB}} ) - H(\sigma^\mathbb{B})\\
	I(\sigma^\mathbb{A};\sigma^\mathbb{B}) & \triangleq & H(\sigma^\mathbb{A})+ H(\sigma^\mathbb{B})- H(\sigma^{\mathbb{AB}}).
\end{IEEEeqnarray}
For classical or mixed classical-quantum states, we simply replace the density operator by the classical random variable for the classical part in these expressions, so they look like, e.g., $H(X)$, $H(X|\sigma^\mathbb{B})$, and $I(\sigma^\mathbb{A}; Y)$. Sometimes, to be more precise, we also write the mutual information as $I(\mathbb{A};\mathbb{B})\vert_\sigma$, indicating that it is the mutual information between space $\mathbb{A}$ and $\mathbb{B}$ evaluated for the joint state $\sigma$. 

Throughout this paper, we use natural logarithms, and measure information in nats, though sometimes we do talk about ``bits'' and ``binary representation''.

We use the usual notation $O(\cdot)$ and $o(\cdot)$ to describe behaviors of functions of $\ave$ in the limit where $\ave$ approaches zero with other variables, if any, held fixed. Specifically, given a reference function $f(\cdot)$ (which might be the constant $1$), a function denoted as $O(f(\ave))$ satisfies
\begin{equation}
	\varlimsup_{\ave\downarrow 0} \left|\frac{O(f(\ave))}{f(\ave)} \right|< \infty,
\end{equation}
while a function denoted as $o(f(\ave))$ satisfies
\begin{equation}
	\lim_{\ave\downarrow 0} \frac{o(f(\ave))}{f(\ave)} = 0.
\end{equation}

%%%%%%%%%%%%%%%%%%%%%%%%%%%%%%

\section{Problem Setting}\label{sec:setting}

In this section we %introduce our notation, and 
describe our setups for optical communication and key distribution. To do so, we first recall some basic results in quantum optics.

\subsection{Beamsplitting and Direct Detection}

We briefly describe how \emph{number (Fock) states} and \emph{coherent states} evolve when passed through a beamsplitter, and what outcomes they induce when fed into a direct-detection receiver, i.e., a photon counter. We refer to \cite{mandelwolf95} for more details. For some background in quantum physics and in quantum information theory, we refer to \cite{nielsenchuang00}.

%We first introduce the (single-mode) beamsplitter. %Denote the annihilation operator on the two input spaces by $\hat{a}$ and $\hat{v}$, respectively, and the annihilation operators on the two output spaces by $\hat{b}$ and $\hat{e}$, respectively. 
Let $\mathbb{A}$ and $\mathbb{V}$ be the two input spaces to a single-mode beamsplitter, and $\mathbb{B}$ and $\mathbb{E}$ be the two output spaces. Let the beamsplitter's transmissivity from $\mathbb{A}$ to $\mathbb{B}$ be $\eta\in[0,1]$. Then this beamsplitter is characterized in the Heisenberg picture by
\begin{subequations}\label{eq:beamsplitter}
\begin{IEEEeqnarray}{rCl}
  \hat{b} & = & \sqrt{\eta}\,\hat{a} + \sqrt{1-\eta}\,\hat{v} \label{eq:beamsplitter1}\\
  \hat{e} & = & \sqrt{1-\eta}\,\hat{a} - \sqrt{\eta}\,\hat{v}.
\end{IEEEeqnarray}
\end{subequations}
Throughout this paper we shall only consider situations where the second input space $\mathbb{V}$ (the ``noise mode'') is in its vacuum state $|0\ket$. 

Ideal direct detection (i.e., photon counting) measures an optical state in the number-state basis. For direct detection on $\mathbb{A}$, the observable is the Hermitian operator $\hat{a}^\dagger \hat{a}$. On state $\rho$, a photon counter gives outcome $n\in\Integers_0^+$ with probability $\bra n|\rho|n\ket$. %Furthermore, if the measurement outcome is $n$, then after the detection the state $\rho$  into the corresponding (pure) number state $|n\ket$.

Obviously, when a number state $|n\ket$, $n\in\Integers_0^+$, is fed into an ideal photon counter, the outcome is $n$ with probability one. But passing $|n\ket$ through a beamsplitter is more complicated: if space $\mathbb{A}$ in \eqref{eq:beamsplitter} is in state $|n\ket$, then the  output state is an entangled state on $\mathbb{B}$ and $\mathbb{E}$:
\begin{equation}
	|\phi\ket^{\mathbb{BE}} = \sum_{i=0}^n \sqrt{\binom{n}{i}} \,\eta^{i/2}(1-\eta)^{(n-i)/2} |i\ket^\mathbb{B} |n-i\ket^\mathbb{E}.
\end{equation}
This implies that performing direct detection on the output of this beamsplitter will yield a binomial distribution on the outcome: the probability of detecting $m$ photons on space $\mathbb{B}$ is
\begin{equation} \label{eq:number_direct}
	\bra m | \sigma^\mathbb{B} |m \ket = \binom{n}{m}\eta^m (1-\eta)^{n-m}
\end{equation}
for $0\leq m \leq n$, and is zero otherwise. It also implies that, if direct detection is performed both on $\mathbb{B}$ and on $\mathbb{E}$, then with probability one the sum of the two outcomes is equal to $n$.

A coherent state $|\alpha\ket$, $\alpha\in\Complex$, can be written in the number-state basis as
\begin{equation}\label{eq:coherent_state}
	|\alpha\ket = \e^{-|\alpha|^2/2} \sum_{n=0}^\infty \frac{\alpha^n}{\sqrt{n!}} |n\ket.
\end{equation}
Thus, when fed into a photon counter,  the probability of $n$ photons being observed in $|\alpha\ket$ is 
\begin{equation}
	\bra n|\alpha\ket\bra \alpha |n\ket = |\bra n|\alpha\ket|^2 = \e^{-|\alpha|^2} \frac{|\alpha|^{2n}}{n!}.
\end{equation} 
Namely, the number of photons in $|\alpha\ket$ has a Poisson distribution of mean $|\alpha|^2$.

Coherent states have the nice property that, when passed through a beamsplitter, the outcomes remain in coherent states. If $|\alpha\ket$ is fed into the beamsplitter~\eqref{eq:beamsplitter}, the output state is
\begin{equation}
	|\phi\ket^{\mathbb{BE}} = |\sqrt{\eta}\,\alpha\ket^{\mathbb{B}}\otimes |\sqrt{1-\eta}\,\alpha\ket^{\mathbb{E}}.
\end{equation}
Therefore, if direct detection is performed both on $\mathbb{B}$ and on $\mathbb{E}$, the outcomes will be two \emph{independent} Poisson random variables of means $\eta|\alpha|^2$ and $(1-\eta)|\alpha|^2$, respectively.

\subsection{Optical Communication} \label{sec:comm_background}

A single-mode pure-loss optical (i.e., bosonic) channel can be described using the beamsplitter~\eqref{eq:beamsplitter1}, where we ignore the output space $\mathbb{E}$ and assume the noise space $\mathbb{V}$ to be in its vacuum state. In this formula, $\mathbb{A}$ is the input space controlled by the transmitter which, in consistency with the key-distribution part, we call Alice;  $\mathbb{B}$ is the output space obtained by the receiver, Bob; and $\eta$ is the transmissivity of the channel. Equivalently, the channel may be described in the Schr\"odinger picture as a completely-positive trace-preserving (CPTP) map from the input state $\rho^\mathbb{A}$ to the output state $\sigma^\mathbb{B}$:
\begin{equation}\label{eq:channel_map}
\sigma^{\mathbb{B}} = \mathcal{C}(\rho^{\mathbb{A}}).
\end{equation}
The explicit characterization of $\mathcal{C}$ is complicated and omitted.

We denote the blocklength of a channel code by $k$. Alice has a message of $kR$ nats\footnote{We ignore the fact that the number of values that the message can take is not an integer.} to convey to Bob. In order to do this, she prepares a state $\rho^k$ over  $\mathbb{A}^k$, subject to an average-photon-number constraint $\ave$ per channel use:
\begin{equation}\label{eq:ave_block}
  \textnormal{tr}\left\{ \left(\sum_{i=1}^k \hat{a}_i^\dag \hat{a}_i \right) \rho^k \right\} \le k \ave
\end{equation}
where $\hat{a}_i$ is the annihilation operator on the input space of the $i$th channel use. The channel is assumed to be memoryless, so the output is given by
\begin{equation}\label{general_channel}
\sigma^k = \mathcal{C}^{\otimes k} (\rho^k).
\end{equation} 
Bob may perform any positive-operator valued measure (POVM) on $\sigma^k$ to reconstruct the message. As usual, the capacity of the channel is defined as the supremum of rates for which there exist sequences of schemes with increasing blocklengths and with the error probabilities approaching zero.

We define the \emph{photon efficiency} of transmission as the rate normalized by the expected number of photons that Bob receives per channel use:\footnote{We adopt this definition rather than normalizing by transmitted photons, because this allows us to derive expressions which are less influenced by the transmissivity of the channel.}
\begin{equation} \label{eq:efficiency}
r (\eta,\ave) \triangleq \frac{R(\eta,\ave)}{\eta \ave}.
\end{equation}
This quantity is upper-bounded by the channel's capacity divided by $\eta \ave$.

\subsection{Key Distribution Using an Optical Channel (Model C)} \label{sec:key_channel_setup}

We next consider the problem where Alice and Bob use the channel of \eqref{eq:beamsplitter} to generate a secret key between them. The channel from Alice to Bob is still characterized by \eqref{eq:beamsplitter1} or by the CPTP~\eqref{eq:channel_map}, but we now assume that an eavesdropper, Eve, obtains the Hilbert space $\mathbb{E}$. Note that this is a worst-case assumption in the sense that Eve obtains the whole \emph{ancilla} system of the channel. Also note that we assume Eve to be passive, so she cannot interfere with the communication; she can only try to distill useful information from her observations. This setting can be seen as a special case of the quantum version of ``Model C'' discussed in \cite{ahlswedecsiszar93}.

The aim of Alice and Bob is to use this channel, together with a two-way, public, but authentic classical channel, to generate a secret key. Let $k$ denote the total number of uses of the optical channel. We impose the same average-photon-number constraint \eqref{eq:ave_block} on Alice's inputs. We assume the public channel is free so we can use it to transmit as many bits as needed, though all these bits will be known to Eve. By the end of a key-distribution protocol,  Alice should be able to compute a bit string $S_A$ and Bob should be able to compute $S_B$ such that
\begin{itemize}
  \item The probability that $S_A=S_B$ tends to one as $k$ tends to infinity;
  \item The key $S_A$ (or $S_B$) is almost uniformly distributed and independent of Eve's observations, in the sense that $$\frac{H(S_A|  \rho_{\textnormal{Eve}})}{\log|\mathcal{S}|}$$ tends to one as $k$ tends to infinity, where $\rho_\textnormal{Eve}$ summarizes all of Eve's observations, and where $\mathcal{S}$ denotes the alphabet for $S_A$ and $S_B$. 
\end{itemize}

%Let $\mathcal{S}$ denote the set in which $S_A$ or $S_B$ takes value. 
We define the \emph{secret-key rate} of a scheme to be
\begin{equation}
	R(\eta,\ave) \triangleq \frac{\log |\mathcal{S}|}{k}
\end{equation}
nats per use of the optical channel. The parameter $\ave$ is the average photon number in~\eqref{eq:ave_block}.

A typical (and rather general) protocol to accomplish this task consists of the following steps:

\textbf{Step 1:} Alice generates random variables $X_1,X_2,\ldots$ which are known to neither Bob nor Eve. She then prepares an optical state $\rho^k$ on $\mathbb{A}^k$ based on $\vect{X}$ and sends the state into the channel, spread over $k$ channel uses.

\textbf{Step 2:} Bob makes measurements on his output state to obtain a sequence $Y_1,Y_2,\ldots$.\footnote{We do not consider feedback from Bob to Alice during the first two steps. As in channel coding, feedback cannot increase the maximum key rate.}

\textbf{Step 3:} (Information Reconciliation) Alice and Bob exchange messages $M_1,M_2,\ldots$
using the public channel. Then Alice computes her raw key $K_A$ as a
function of $(\vect{X},\vect{M})$, and Bob computes his raw key $K_B$
as a function of $(\vect{Y},\vect{M})$. They try to ensure that $K_A=K_B$ with high probability, but Eve might have partial information about the raw key.

\textbf{Step 4:} (Privacy Amplification) Alice and Bob randomly pick one from a set of universal hashing functions. They apply the chosen function to their raw keys $K_A$ and $K_B$ to obtain the secret keys $S_A$ and $S_B$, respectively.

Privacy amplification has been extensively studied in literature. Denote the quantum state that Eve obtained in Step~1 from the optical channel by $\sigma^{\mathbb{E}^k}$. It is shown in \cite{rennerkonig05} that, provided $K_A=K_B$ with probability close to one, the privacy amplification step (i.e., Step~4) can be accomplished successfully with high probability, and the length of the secret key in nats, i.e., $\log|\mathcal{S}|$, can be made arbitrarily close to\footnote{To be precise, to achieve \eqref{eq:rennerkonig}, Alice and Bob should repeat Steps~1 to~3
  many times, and then do Step~4 on all the raw keys together.}
\begin{equation}\label{eq:rennerkonig}
  H(K_A|\vect{M},\sigma^{\mathbb{E}^k}).
\end{equation}
Hence, in this paper, we shall not discuss how to accomplish Step~4. As we shall see, in some cases Step~4 can be omitted. If not, then we shall concentrate on Steps~1 to~3, try to maximize~\eqref{eq:rennerkonig}, and compute the secret-key rate as
\begin{equation}\label{eq:key_rate}
	R(\eta,\ave) = \frac{H(K_A|\vect{M},\sigma^{\mathbb{E}^k})}{k}.
\end{equation}

As mentioned previously, in Step~1, we impose the same average-photon-number constraint on Alice \eqref{eq:ave_block} as in the communications case. Consequently, we define the photon efficiency (of key distribution) $r(\eta,\ave)$ in the same way as in communications, namely, as in \eqref{eq:efficiency}, except that now $R(\eta,\ave)$ is the secret-key rate.

\subsection{Key Distribution Using a Photon Source (Model S)}\label{sec:key_source_setup}

In some key-distribution protocols, Alice and Bob make use of a random source, rather than Alice preparing states, to generate a secret key, as in the ``Model S'' discussed in~\cite{ahlswedecsiszar93}. In optical applications one can, for instance, generate a uniform stream of random, temporally-entangled photon pairs, which are very useful for key distribution; see, e.g., \cite{zhongwong12}.

An accurate model for such temporally-entangled photon sources divides the timeline into very fine temporal modes, where each temporal mode is in a pure, entangled state on its two output Hilbert spaces, with the number of photon pairs having a geometric (Bose-Einstein) distribution of a very small mean. Such a model, however, would be intractable for precise key-rate analyses. We hence choose a simplified model as follows. Let the timeline be divided into slots, where each slot can be thought of as one ``use'' of the source. Each slot contains many, e.g., a thousand, temporal modes. This results in the number of photon pairs in each slot having a Poisson distribution, whose mean $\ave$ equals the total number of temporal modes times the mean photon number in each mode. We ignore the fine structures inside each slot and describe it with only two Hilbert spaces, $\mathbb{C}$ and $\mathbb{D}$. We also ignore the entanglement between the two spaces and simplify the optical state to a mixed one with classical correlation only. The optical state emitted by the source in every source use is thus given by 
\begin{equation}\label{eq:source_model}
	\rho^{\mathbb{CD}} = \sum_{i=0}^\infty \frac{\ave^i \e^{-\ave}}{i!}  |i\ket \bra i|^\mathbb{C} \otimes |i\ket\bra i|^\mathbb{D}.
\end{equation}
%The assumption behind this model is that a single slot consists of many ``physical'' temporal modes, where each temporal mode is in an entangled state on the two output spaces, with a Bose-Einstein (i.e., geometric) distribution of a very small mean photon number. When the total number of temporal modes is large enough, the photon number (in either output space) in one source use has approximately a Poisson distribution, whose mean $\ave$ equals the total number of temporal modes times the mean photon number in each mode. Furthermore, since each physical temporal mode is in an entangled state, the number of photons detected in $\mathbb{C}$ must equal that detected in $\mathbb{D}$ with probability one, as is the case for the state given in \eqref{eq:source_model}.

To justify the simplification we make, note the following.
\begin{itemize}
\item{Discretization:} In our schemes, Alice and Bob will never measure the arrival time of a photon with higher accuracy than the duration of one slot. In this case, it is easy to show that Eve cannot have any advantage by making finer measurements.
\item{Classical correlation:} When Alice and Bob only make direct detection with the given time accuracy and Eve is listening passively via a beamsplitter, entanglement does not play any role. Note that this would not be the case if we were interested in a secrecy proof against a general (possibly active) Eve; see Section~\ref{sec:discussion}.
\end{itemize}

We assume that the source is collocated with Alice, who keeps $\mathbb{C}$; while the photons in $\mathbb{D}$ are sent to Bob through a lossy optical channel. To account for coupling losses, we can assume that Alice also only has access to a lossy version of $\mathbb{C}$. Specifically, $\rho^\mathbb{C}$ is passed through a beamsplitter, like the one in \eqref{eq:beamsplitter}, of transmissivity $\eta_A$ before it reaches Alice:
\begin{subequations}\label{eq:beamsplitterA}
\begin{IEEEeqnarray}{rCl}
  \hat{a} & = & \sqrt{\eta_A}\,\hat{c} + \sqrt{1-\eta_A}\,\hat{v}\\
  \hat{f} & = & \sqrt{1-\eta_A}\,\hat{c} - \sqrt{\eta_A}\,\hat{v}.
\end{IEEEeqnarray}
\end{subequations}
But, except for Section~\ref{sec:etaA}, we shall ignore coupling losses and take $\eta_A=1$. 
Similarly $\rho^\mathbb{D}$ is passed through a beamsplitter of transmissivity $\eta_B$ before it reaches Bob:
\begin{subequations}\label{eq:beamsplitterB}
\begin{IEEEeqnarray}{rCl}
  \hat{b} & = & \sqrt{\eta_B}\,\hat{d} + \sqrt{1-\eta_B}\,\hat{u} \\
  \hat{e} & = & \sqrt{1-\eta_B}\,\hat{d} - \sqrt{\eta_B}\,\hat{u}.
\end{IEEEeqnarray}
\end{subequations}
We assume $\eta_B<1$ throughout. 
Both noise modes $\mathbb{V}$ and $\mathbb{U}$ are assumed to be in their vacuum states. Note that the two beamsplitters behave independently of each other.

Since the source is collocated with Alice, we know the photons that are lost from $\mathbb{C}$ to $\mathbb{A}$ (in case $\eta_A<1$) should \emph{not} reach Eve; Eve only has access to the Hilbert space $\mathbb{E}$.

It is useful to describe the output states when $\rho^\mathbb{D}$ passes through the beamsplitter on Bob's side. We first write down $\rho^\mathbb{D}$ by taking partial trace of \eqref{eq:source_model}:
\begin{equation}
	\rho^\mathbb{D} = \sum_{i=0}^\infty \frac{\ave^i \e^{-\ave}}{i !} |i\ket\bra i|,
\end{equation}
which can be equivalently written as
\begin{equation}\label{eq:rhoD_coherent}
	\rho^\mathbb{D} = \frac{1}{2\pi} \int_0^{2\pi} \d \theta |\alpha(\theta)\ket\bra\alpha(\theta)|
\end{equation}
where
\begin{equation}
	\alpha(\theta) = \sqrt{\ave} \e^{i\theta}.
\end{equation}
From \eqref{eq:rhoD_coherent} and \eqref{eq:beamsplitterB} it is straightforward to obtain the output optical state on $\mathbb{BE}$:
\begin{IEEEeqnarray}{rCl}
	\sigma^{\mathbb{BE}}	
& = & \frac{1}{2\pi} \int_0^{2\pi} \d \theta |\sqrt{\eta}\, \alpha(\theta)\ket \bra \sqrt{\eta}\, \alpha(\theta) |^\mathbb{B} \nonumber\\
	& & ~~~~~~~~\otimes |\sqrt{1-\eta}\, \alpha(\theta)\ket \bra \sqrt{1-\eta}\, \alpha(\theta)|^\mathbb{E}.\label{eq:sigmaBE}
\end{IEEEeqnarray}
By taking partial traces of \eqref{eq:sigmaBE} we obtain Bob's and Eve's states:
\begin{IEEEeqnarray}{rCl}
	\sigma^\mathbb{B} & = &  \frac{1}{2\pi} \int_0^{2\pi} \d \theta |\sqrt{\eta}\, \alpha(\theta)\ket \bra \sqrt{\eta}\, \alpha(\theta) | \\
	& = & \sum_{i=0}^\infty \frac{(\eta  \ave)^i \e^{-\eta \ave}}{i!}  |i\ket\bra i|\\
	\sigma^\mathbb{E} & = &  \frac{1}{2\pi} \int_0^{2\pi} \d \theta |\sqrt{1-\eta}  \alpha(\theta)\ket \bra \sqrt{1-\eta}  \alpha(\theta) | \\
	& = & \sum_{i=0}^\infty \frac{((1-\eta) \ave)^i \e^{-(1-\eta)\ave}}{i!}  |i\ket\bra i|. \label{eq:sigmaE}
\end{IEEEeqnarray}

The joint state $\sigma^{\mathbb{BE}}$ is not a tensor state, i.e., $\mathbb{B}$ and $\mathbb{E}$ are not independent. However, if direct detection---namely, projective measurement in the number-state basis---is performed on $\mathbb{B}$ (or on $\mathbb{E}$), the post-measurement state on $\mathbb{E}$ (or on $\mathbb{B}$) is independent of the measurement outcome; in particular, the photon numbers in $\mathbb{B}$ and in $\mathbb{E}$ are independent. Indeed, conditional on the measurement outcome on $\mathbb{B}$ being $i$, the post-measurement state on $\mathbb{E}$ is
\begin{IEEEeqnarray}{rCl}
 	\lefteqn{\frac{\textnormal{tr}_\mathbb{B} \left\{ |i\ket\bra i|^\mathbb{B} \sigma^{\mathbb{BE}}\right\}}{\textnormal{tr}\left\{|i\ket\bra i|^\mathbb{B} \sigma^\mathbb{B} \right\}}}\nonumber\\
	 &= & \frac{\displaystyle \frac{1}{2\pi} \int_0^{2\pi} \d \theta |\bra i | \sqrt{\eta}\, \alpha(\theta)\ket|^2 |\sqrt{1-\eta}\, \alpha(\theta)\ket \bra \sqrt{1-\eta}\, \alpha(\theta)|} {\displaystyle \frac{1}{2\pi} \int_0^{2\pi} \d \theta |\bra i | \sqrt{\eta}\, \alpha(\theta)\ket|^2}\nonumber\\ {} \\
	& = & \frac{1}{2\pi} \int_0^{2\pi} \d \theta |\sqrt{1-\eta}\, \alpha(\theta)\ket \bra \sqrt{1-\eta}\, \alpha(\theta) | \label{eq:indep_theta}\\
	& = & \sigma^\mathbb{E}
\end{IEEEeqnarray}
where \eqref{eq:indep_theta} follows because
\begin{equation}
	|\bra i | \sqrt{\eta}\, \alpha(\theta)\ket|^2 = \frac{(\eta \ave)^i \e^{-\eta \ave}}{i!}
\end{equation}
does not depend on $\theta$. 

We now describe a scheme (which is again rather general) for Alice and Bob to use this source $k$ times to generate a secret key. In this scheme, Steps~3 and~4 are exactly the same as in Section~\ref{sec:key_channel_setup}, but Steps~1 and~2 are now replaced by:

\textbf{Step 1':} Alice makes measurements on her state $\sigma^{\mathbb{A}^k}$ to obtain the sequence $X_1,X_2,\ldots$.

\textbf{Step 2':} Bob makes measurements on his state $\sigma^{\mathbb{B}^k}$ to obtain the sequence $Y_1,Y_2,\ldots$.

As in Section~\ref{sec:key_channel_setup}, we shall concentrate on Steps~1', 2', and~3. The secret-key rate, denoted by $R(\eta_A,\eta_B,\ave)$, is again given by the right-hand side of \eqref{eq:key_rate}, with unit ``nats per source use''. But the photon efficiency in this setting is defined as
\begin{equation}\label{eq:PE_etaAB}
	r(\eta_A,\eta_B,\ave) \triangleq \frac{R(\eta_A,\eta_B,\ave)}{\eta_A \eta_B \ave}.
\end{equation}
We choose this definition because $\eta_A \eta_B \ave$ is the expected number of photon pairs in each source use that reach both Alice and Bob,\footnote{We interpret this quantity in a semi-classical way: each photon pair reaches Alice with probability $\eta_A$, and reaches Bob with probability $\eta_B$ independently of whether it reaches Alice or not, hence the fraction of photon pairs that reach both Alice and Bob is $\eta_A \eta_B$. We do not know if there exists a physical observable, i.e., a Hermitian operator that corresponds to this value.} and because these photon pairs are those that contain correlated information that can be used to generate the secret key. When $\eta_A=1$, we omit the subscript in $\eta_B$, and denote the secret-key rate and photon efficiency simply by $R(\eta,\ave)$ and $r(\eta,\ave)$, respectively. Obviously, they are again related by \eqref{eq:efficiency}.

%%%%%%%%%%%%%%%%%%%%%%%%%%%%%%%%%%%%%%%%%%%%%%%%%%%%%%%%%%%%%

\section{Background: Photon-Efficient Communication using Pulse-Position Modulation}\label{sec:communication}

Before we address key distribution, we give some results regarding communications over the bosonic channel described in Section~\ref{sec:comm_background}. %, in the limit $\ave\rightarrow 0$. 
These results serve as a point of reference, and the derivation provides tools later used in key distribution. See also \cite{kochmanwornell12,erkmen12,wangwornell14}.

The capacity of a quantum channel is characterized by the formula found by Holevo \cite{holevo98} and by Schumacher and Westmoreland \cite{schumacherwestmoreland97}. For the pure-loss bosonic channel \eqref{eq:beamsplitter1} under constraint~\eqref{eq:ave_block}, this capacity is $g(\eta \ave)$ nats per channel use \cite{giovannettiguha04}, where 
\begin{equation} \label{eq:g_x} g(x) \triangleq (x+1) \log (x+1) - x \log x,\quad x>0. \end{equation}
This immediately implies that the photon efficiency \eqref{eq:efficiency} satisfies:
\begin{equation}\label{holevo_eff} r_\textrm{quantum}(\eta,\ave) = \frac{g(\eta \ave)}{\eta \ave} = \log{\frac{1}{\eta \ave}} + 1 + o(1). 
\end{equation} 
Note that the efficiency is unbounded, that is,
\begin{equation} \lim_{\ave \downarrow 0} r_\textnormal{quantum}(\eta,\ave) = \infty. \end{equation} 
Hence, in terms of \cite{gallager87,verdu90}, the capacity per unit cost $\sup_{\ave} r(\eta,\ave)$ of the channel \eqref{eq:channel_map} is infinite. %Consequently, we call the regime where $\ave\ll 1$ the \emph{photon-efficient regime}. 

The capacity $g(\eta \ave)$ is achievable by Alice using product (i.e., nonentangled), pure input states
\begin{equation}\label{eq:indep_input}
|\psi^k \ket = |\psi_1 \ket \otimes |\psi_2 \ket \otimes \cdots \otimes |\psi_k \ket .
\end{equation}
Indeed, in this paper we limit our attention to such mode of operation, where 
the average-photon-number constraint \eqref{eq:ave_block} becomes
\begin{equation}\label{eq:ave}
  \frac{1}{k} \sum_{i=1}^k \bra {\psi_i}|\hat{a}_i^\dag\hat{a}_i| \psi_i \ket \le  \ave.
\end{equation}

For the degenerate case $\eta=1$, a simple capacity-achieving codebook consists only
of number states, where the photon numbers' empirical distribution is independent and identically distributed (i.i.d.) geometric (i.e., Bose-Einstein). Bob's optimal measurement for this codebook is simply per-channel-use direct detection. We shall see in Section~\ref{sec:comm_number} that, in the photon-efficient regime, this code construction can be further simplified and can be used also when $\eta<1$, without sacrificing much photon efficiency.

For the general case where $\eta$ may not be one, the capacity can be achieved if
%\begin{enumerate}
%\item 
Alice's codebook consists of coherent states
\begin{equation}
	|\psi^k\ket = |\alpha_1\ket \otimes |\alpha_2\ket \otimes \cdots \otimes |\alpha_k\ket,
\end{equation}
and if Bob performs a general (not per-channel-use) POVM on the output state, which is
\begin{equation}
|\phi^k\ket = |\sqrt{\eta}\,\alpha_1\ket \otimes |\sqrt{\eta}\,\alpha_2\ket \otimes \cdots \otimes |\sqrt{\eta}\,\alpha_k\ket.
\end{equation}
%Note that when using coherent states, codebook design reduces to a classical one: a codeword is a sequence of complex numbers $\alpha_1,\ldots,\alpha_k$, subject to a power constraint:
%\begin{align} \label{eq:coherent_constraint}
%\frac{1}{k} \sum_{i=1}^k |\alpha_i|^2 \leq \ave.
%\end{align}
In this case, the average-photon-number constraint \eqref{eq:ave_block} becomes
\begin{equation}\label{eq:ave_coherent}
	\sum_{i=1}^k |\alpha_i|^2 \le k\ave.
\end{equation}
It is known that capacity-achieving codebooks of coherent states should have empirical distributions that resemble i.i.d. complex-Gaussian with mean zero and variance $\ave$ \cite{giovannettiguha04}. The main problem with such a code is that Bob's POVM is almost impossible to implement using today's technology. Hence we are interested in ``practical'' schemes, in particular, in schemes where Bob uses per-channel-use direct detection while Alice sends coherent states. As we shall see in Section~\ref{sec:comm_coherent}, this restriction induces a second-order-term loss in photon efficiency.

%It can be shown, that capacity is achieved by the ensemble of complex-Gaussian i.i.d. codebooks of power $\ave$.

%\item $|\psi_i\ket$ are number states (exponentially distributed), and Bob makes a sequence of orthogonal direct-detection (photon-counting) measurements. 
%\end{enumerate}

%In order to achieve \eqref{holevo_eff},   The rest of this section is devoted to analyzing the loss of direct-detection communication schemes with respect to (w.r.t.) the quantum limit \eqref{holevo_eff} in the photon-efficient regime.

\subsection{Alice Sends Binary Number States}\label{sec:comm_number}

Consider the case where the sequence of states sent by Alice consists only of the number states $|0\ket$ and $|1\ket$, and where Bob uses direct detection. Recalling \eqref{eq:number_direct}, for input $|0\ket$ Bob will always detect no photon, while for input $|1\ket$ Bob detects one photon with probability $\eta$, and detects no photon otherwise. Thus the scheme induces a classical Z channel. The maximum achievable rate is, according to the classical channel coding theorem \cite{shannon48}, the maximum mutual information over this channel.

Let
\begin{equation} \label{I_Z} I_\textnormal{Z}(q,\mu) \triangleq H_2(q  \mu) - q H_2(\mu) \end{equation} be the mutual information over a Z channel with input probability $P_X(1)=q$ and transition probability $P_{Y|X}(1|1)=\mu$, where $H_2(\cdot)$ is the binary entropy function
\begin{equation}
	H_2 (x) \triangleq x\log\frac{1}{x}+(1-x)\log\frac{1}{1-x},\quad 0<x<1.
\end{equation} 
Due to the photon-number constraint, the input distribution must satisfy $q \leq \ave$.\footnote{The expected number of photons translates to a per-codeword constraint via a standard expurgation argument.} It is easy to see that $I_\textnormal{Z}(q,\mu)$ is monotonically increasing in $q$ for small enough $q$, and hence, in the regime of interest, we should choose $q=\ave$, achieving rate $I_\textnormal{Z}(\ave,\eta)$. The resulting photon efficiency can be readily shown to satisfy:
\begin{equation}\label{eq:numZ}
 r_\textnormal{num,Z} (\eta,\ave) = \frac{I_\textnormal{Z}(\ave,\eta)}{\eta \ave}  = r_\textrm{quantum}(\eta,\ave) - \frac{H_2(\eta)}{\eta} + o(1), \end{equation} reflecting a constant efficiency loss with respect to the optimum \eqref{holevo_eff}.

For the scheme described above, the task of (classical) coding is difficult: one needs mutual-information-approaching codes for a Z channel with a highly skewed input.  We can solve this problem by replacing the i.i.d. binary codebook by PPM: the input sequence consists of ``frames'' of length $\lceil\nicefrac{1}{\ave}\rceil$, where each frame includes exactly one photon, whose position is uniformly chosen inside the frame. (If the blocklength is not divisible by $\lceil\nicefrac{1}{\ave}\rceil$, then we ignore the remainder.) This scheme converts the channel to a $\lceil\nicefrac{1}{\ave}\rceil$-ary erasure channel. By computing the capacity of this erasure channel, we easily see that the photon efficiency of the PPM scheme is:
\begin{equation} r_\textnormal{num,PPM}(\eta, \ave) = \log {\frac{1}{\ave}} + o(1), \end{equation} which again reflects only a constant loss compared to the optimal efficiency \eqref{holevo_eff}. The large-alphabet erasure channel is much like a packet-erasure channel encountered in internet applications, and good off-the-shelf codes are available.

\subsection{Alice Sends Binary Coherent States}\label{sec:comm_coherent}

Generating the number state $|1\ket$ is hard in practice. We hence turn to coherent states, which are a good model for light coming out of laser sources \cite{shapiro09}. 

We consider a simple binary-coherent-state scheme. In this scheme, Alice first generates a classical binary codebook where the probability of $1$ is $q$. She then maps $0$ and $1$ to the coherent states $|0\ket$ and $\left|\nicefrac{\ave}{q}\right\ket$, respectively. Note that doing this satisfies the average-power constraint \eqref{eq:ave_coherent}. Bob uses direct detection that is not photon-number resolving (PNR), i.e., he views a measurement with no photon as a logical $0$, and views any measurement with at least one photon as a logical~$1$. (Such a detector is easier to build than a PNR detector, which outputs the exact number of detected photons.) This results again in a classical Z channel, with
\begin{equation} P_{Y|X}(1|1) = \mu_\textnormal{coh} (q,\ave) \triangleq 1 - \exp\left(-\frac{\eta \ave}{q}\right). \end{equation} 
We can thus achieve $I_\textnormal{Z}(q,\mu_\textnormal{coh})$ nats per channel use, where $q$ should be chosen to maximize $I_\textnormal{Z}(q,\mu_\textnormal{coh})$. The exact analytical optimization is complicated, but in the photon-efficient regime the approximate optimum (which yields the best rate up to the approximation of interest) is given by
\begin{equation}\label{p_star} q^*(\ave) = \frac{\eta \ave}{2}\log\frac{1}{\ave}. \end{equation}
The resulting photon efficiency is given by: 
\begin{IEEEeqnarray}{rCl}
\label{Z_eff1} r_\textnormal{coh,Z}(\eta,\ave) & = & \frac{I_Z\bigl(q^*(\ave),\mu_\textnormal{coh}(q^*(\ave),\ave)\bigr)}{\eta \ave} \\
	& =&  \log{\frac{1}{\eta \ave}} - \log \log{\frac{1}{\ave}} + \log 2 -1 + o(1). \label{Z_eff} \IEEEeqnarraynumspace
\end{IEEEeqnarray} 

Comparing to the quantum limit \eqref{holevo_eff}, we see that the efficiency loss of the coherent-state-and-direct-detection scheme with respect to the optimal performance grows as $\log \log \nicefrac{1}{\ave}$ as $\ave$ decreases in the photon-efficient regime. This loss is inherent to any ``classical'' transmission scheme, even if general (non-binary) coherent states are sent \cite{wangwornell14}, or if the receiver is allowed to use feedback between measurements \cite{chungguhazheng11}. %The loss can be seen as resulting from the fact that even for a logical $1$, the receiver may detect no photon even without channel loss. Another effect is that multiple detections may occur, ``wasting'' photons; however, this can be shown to only affect the constant term in photon efficiency. 

Similarly to the case of Alice sending number states, we can alleviate the difficulty of coding by replacing the i.i.d. codebooks with PPM frames, an idea already exploited in \cite{pierce78,massey81}. Indeed, using PPM frames of length $b$ with the optimum (to the approximation order) choice of~\eqref{p_star} and $b=\lceil \nicefrac{1}{q^*(\ave)} \rceil$, this efficiency is
\begin{equation}\label{PPM_eff}
	r_\textnormal{coh,PPM}(\ave)= \frac{\mu_\textnormal{coh} \bigl(q^*(\ave), \ave\bigr) \log b}{\eta  b \ave},
\end{equation}
and has the same expression as on the right-hand side of \eqref{Z_eff}, i.e., the further efficiency loss incurred by restricting to PPM is $o(1)$. 

  \begin{figure}[t]
        \centering
        \psfrag{E}[cc]{\small $\mathcal{E}$}
	\psfrag{rquantum}[Bl][Bl]{\small $r_\textnormal{quantum}$}
	\psfrag{rz}[Bl][Bl]{\small $r_\textnormal{coh,Z}$}
	\psfrag{rppm}[Bl][Bl]{\small $r_\textnormal{coh,PPM}$}
      \psfrag{Photon efficiency (nats/photon)}[Bl][Bl]{\footnotesize Photon efficiency (nats/photon)} 
        \includegraphics[width=0.5\textwidth]{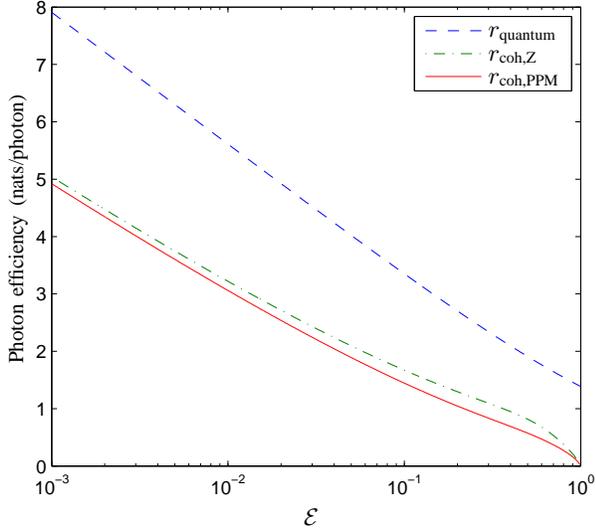}
        %\post{iid}{0.45}
        \caption{Photon efficiency in the different cases discussed in Section~\ref{sec:communication}. Efficiency in the quantum case $r_\textnormal{quantum}$ is computed from \eqref{holevo_eff}; efficiency for coherent-state inputs and Z-channel model $r_\textnormal{coh,Z}$ from \eqref{Z_eff1}; and efficiency for coherent-state inputs and PPM $r_\textnormal{coh,PPM}$ from \eqref{PPM_eff}. For all three we let the channel be lossless, i.e., we choose $\eta=1$.} 
	\label{fig_eff}  
    \end{figure}

Figure~\ref{fig_eff} depicts the photon efficiency in the different cases discussed in this section. It can be appreciated that, while the loss of using coherent states with direct detection is large, the further loss of PPM is small. As we shall see, similar phenomena are also observed in key-distribution scenarios. %However, if one is concerned with this additional loss, it can be reduced by introducing an additional super-symbol to the PPM alphabet, where all entries are zero; see \cite{Guha10Tokyo}.

%%%%%%%%%%%%%%%%%%%%%%%%%%%%%%%%%%%%

\section{Key Distribution in Model C}\label{sec:channel_case}

In this section we study the key-distribution problem in Model C, which we set up in Section~\ref{sec:key_channel_setup}. 

To the best of our knowledge, the maximum secret-key rate, and hence also the maximum photon efficiency, in this setting are not yet known.  However, in the photon-efficient regime we have the following asymptotic upper bound. (Later we show that this upper bound is tight within a constant term).

\begin{proposition}\label{prp:C_max}
	The maximum photon efficiency for key distribution in Model~C as described in Section~\ref{sec:key_channel_setup} satisfies
	\begin{equation}\label{eq:PE_channel_max}
		r_\textnormal{max} (\ave) \le \log{\frac{1}{\eta \ave}} + 1 + o(1).
	\end{equation}
\end{proposition}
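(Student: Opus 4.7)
The plan is to show the slightly stronger bound $R(\eta,\ave)\le g(\eta\ave) + o(1)$ and then to invoke the asymptotic expansion~\eqref{holevo_eff} to conclude. Intuitively, any secret key is in particular a form of common information between Alice and Bob whose rate cannot exceed the rate at which Alice can convey classical data to Bob through the channel; and under~\eqref{eq:ave_block} this latter quantity is upper bounded by the Holevo capacity $g(\eta\ave)$ of the pure-loss bosonic channel \cite{giovannettiguha04}. The remainder of the argument turns this intuition into a standard information-inequality chain.

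Concretely, I would proceed as follows. Using $H(S_A)=\log|\mathcal{S}|-o(k)$ (near-uniformity), $I(S_A;S_B)\ge H(S_A)-o(k)$ (Fano's inequality together with $\Prob(S_A=S_B)\to 1$), and the fact that $S_B$ is a deterministic function of $(\vect{Y},\vect{M})$, I obtain $\log|\mathcal{S}|\le I(S_A;\vect{Y},\vect{M})+o(k)$. The secrecy requirement forces $S_A$ to be almost independent of Eve's complete view $(\vect{M},\sigma^{\mathbb{E}^k})$, so in particular $I(S_A;\vect{M})=o(k)$, whence $I(S_A;\vect{Y},\vect{M})=I(S_A;\vect{Y}\,|\,\vect{M})+o(k)$. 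Treating $\vect{Y}$ as the outcome of a measurement on Bob's quantum state $\sigma^{\mathbb{B}^k}$, applying the Holevo bound to the classical-quantum ensemble indexed by $S_A$ (for each realization of $\vect{M}$, then averaging) and using concavity of von Neumann entropy gives $I(S_A;\vect{Y}\,|\,\vect{M})\le H(\sigma^{\mathbb{B}^k})$. Finally, since the channel is memoryless,~\eqref{eq:ave_block} forces the total mean photon number at Bob's output to be at most $k\eta\ave$, so the maximum-entropy property of the thermal state yields $H(\sigma^{\mathbb{B}^k})\le k\,g(\eta\ave)$. Dividing by $k\eta\ave$ and substituting~\eqref{holevo_eff} completes the proof.

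The only mildly delicate step is the interleaving of the quantum and classical phases: $\vect{M}$ is generated after Bob measures and can depend on both $\vect{X}$ and $\vect{Y}$, so the conditional Holevo step has to be handled with care. Since $\vect{M}$ is purely classical, however, this amounts to averaging the standard Holevo bound over its realizations and applying the classical chain rule for mutual information, and I expect no essential obstacle. In effect, the proof is a direct specialization of the Shannon--Holevo capacity upper bound to the secret-key setting.
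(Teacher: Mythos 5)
Your overall route is the same as the paper's: bound the secret-key rate by the classical communication capacity of the channel from Alice to Bob, namely $g(\eta\ave)$, and then invoke \eqref{holevo_eff}. The paper performs the first step by citation (to \cite{wilmink03}), whereas you attempt a self-contained derivation; the opening of your chain (near-uniformity, Fano, dropping $\vect{M}$ via the secrecy condition) is sound.

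The gap sits exactly at the step you flag as ``mildly delicate,'' and the resolution you sketch does not work as stated. The Holevo bound in the form you invoke requires the Markov structure ``classical index $\to$ quantum state $\to$ measurement outcome'': conditioned on $\vect{M}=\vect{m}$, the outcome $\vect{Y}$ would have to arise from measuring a state $\sigma_{s,\vect{m}}^{\mathbb{B}^k}$ determined by $S_A=s$. But $\vect{Y}$ is produced in Step~2, \emph{before} $\vect{M}$ is generated, $\vect{M}$ depends interactively on $\vect{Y}$ itself, and $S_A$ in turn depends on $\vect{M}$; conditioning on $\vect{M}=\vect{m}$ therefore destroys the ensemble-plus-measurement structure, so ``averaging the standard Holevo bound over realizations of $\vect{M}$'' is not a licensed move. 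The standard repair is to strip the conditioning classically first: since $S_A$ is a function of $(\vect{X},\vect{M})$, one has $I(S_A;\vect{Y}\,|\,\vect{M})\le I(\vect{X};\vect{Y}\,|\,\vect{M})$, and the interactive-communication lemma of \cite{ahlswedecsiszar93,maurer93} (each message is a function of either $\vect{X}$ or $\vect{Y}$ together with past messages) gives $I(\vect{X};\vect{Y}\,|\,\vect{M})\le I(\vect{X};\vect{Y})$. Only then does the \emph{unconditional} Holevo bound apply to the genuine ensemble $\{p(\vect{x}),\sigma_{\vect{x}}^{\mathbb{B}^k}\}$, yielding $I(\vect{X};\vect{Y})\le H(\sigma^{\mathbb{B}^k})\le k\,g(\eta\ave)$ by subadditivity, the maximum-entropy property of thermal states, and concavity of $g$. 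With that substitution your argument goes through and recovers the paper's bound.
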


\begin{proof} We use the fact that the maximum secret-key rate over a quantum channel cannot exceed the communication capacity of the same channel. This follows, e.g., from \cite[Chapter I, Theorem 5.1]{wilmink03}. Recalling \eqref{holevo_eff}, the proof is completed. \end{proof} 

As in the communication setting, we shall mostly focus on key-distribution schemes in which Bob only employs direct detection. As we shall see in Section~\ref{sec:key_channel_number}, if Alice can send number states---even only binary number states---the photon-efficiency loss of direct detection is at most a constant term in the photon-efficient regime. However, in Section~\ref{sec:key_channel_coherent} we show that if Alice can only send coherent states, then the loss in photon efficiency scales like $\log\log\nicefrac{1}{\ave}$. These results are similar to their optical-communication counterparts. Also similar to the communication scenario is the fact that PPM is nearly optimal in terms of photon efficiency; in the context of key distribution, PPM allows us to greatly simplify the coding task in the information-reconciliation step.

\subsection{Alice Sends Binary Number States}\label{sec:key_channel_number}

Consider the following key-distribution scheme. 

\setcounter{sc}{3}

\begin{scheme} \label{Scheme_C-1}
\mbox{}
  \begin{enumerate}
    \item Let $b\triangleq \lceil \nicefrac{1}{\ave} \rceil$. We divide
      the whole block of $k$ channel uses into frames each
      consisting of $b$ consecutive uses (and ignore the remainder).
%\footnote{We assume $n$ is much larger than $b$ and ignore the problem that $n$ might not be divisible by $b$.}   
    \item Alice generates a sequence of integers
      $\tilde{X}_1,\tilde{X}_2,\ldots$ i.i.d.
      uniformly in $\{1,\ldots,b\}$. These are the
      ``pulse positions''. Within the $i$th frame,
      $i\in\{1,2,\ldots\}$, she sends the
      number state $|1\ket$ in the $\tilde{X}_i$th channel use, and
      sends $|0\ket$ in all other channel uses.
    \item Bob makes direct detection on every channel output. Since Alice sends one 
    photon per frame, Bob will either detect a single photon or no photon per frame.
    Let the set of frames where Bob had a detection be denoted as $\{i_1,i_2,\ldots\}$, and denote the detection positions inside
      these bins by $\tilde{Y}_{i_1},\tilde{Y}_{i_2},\ldots$. Bob tells Alice
      the values of $i_1,i_2,\ldots$ using the public channel.
    \item Alice generates the secret key from
      $\tilde{X}_{i_1},\tilde{X}_{i_2},\ldots$, and Bob generates the
      secret key from $\tilde{Y}_{i_1},\tilde{Y}_{i_2},\ldots$, both by
      directly taking the binary representation of these integers.
  \end{enumerate}
\end{scheme}

The average-photon-number constraint \eqref{eq:ave_block} is clearly satisfied. Scheme~\ref{Scheme_C-1} is rather simple in the sense that
\begin{itemize}
	\item Alice's input states are either $|0\ket$ or $|1\ket$;
%	\item Alice performs simple PPM;
	\item Bob's detector can be non-PNR;
	\item The information-reconciliation step is uncoded, and only involves one-way communication from Bob to Alice;
	\item There is no privacy-amplification step.
\end{itemize}

As the next proposition shows, this simple scheme performs very well in the photon-efficient regime: it is at most a constant term away from optimum. Compared to the communication case~\eqref{holevo_eff}, this proposition also shows that the loss in photon efficiency due to the secrecy requirement is at most a constant term.% in the photon-efficient regime.

\begin{proposition}\label{prp:C-1}
	Scheme~\ref{Scheme_C-1} generates a secret key between Alice and Bob, and its photon efficiency is
	\begin{equation}
		r_{\textnormal{C-1}}(\eta,\ave) = \log\frac{1}{\ave}+o(1)
	\end{equation}
	for all $\eta\in(0,1]$.
\end{proposition}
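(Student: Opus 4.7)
The plan is to verify the photon-number constraint, show that Alice and Bob produce identical raw keys that are uniform and independent of Eve's information, and then count the key bits per channel use. The constraint is immediate: Scheme~\ref{Scheme_C-1} emits exactly one photon in every frame of $b=\lceil 1/\ave\rceil$ channel uses, so the average photon number per channel use is $1/b\le\ave$, which satisfies~\eqref{eq:ave_block}.

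The core of the argument that I would carry out is a single-photon beamsplitter computation. Whenever Alice sends $|1\ket$, equation~\eqref{eq:beamsplitter} yields the joint output on $\mathbb{BE}$ as $\sqrt{\eta}\,|1\ket^\mathbb{B}|0\ket^\mathbb{E}+\sqrt{1-\eta}\,|0\ket^\mathbb{B}|1\ket^\mathbb{E}$, so in every frame exactly one of Bob and Eve receives the photon---Bob with probability $\eta$ and Eve with probability $1-\eta$. Conditioned on Bob detecting in frame $i$, Eve's state over the $b$ modes of that frame is the vacuum $|0\ket^{\otimes b}$, which carries no information about Alice's position $\tilde X_i$. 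The public announcement of the successful-frame indices $\{i_1,i_2,\ldots\}$ depends only on which of $\{\textnormal{Bob},\textnormal{Eve}\}$ got the photon in each frame and not on the position \emph{within} any frame; since frames are independent, Eve's conditional joint state factorizes across frames, being pure vacuum on each successful frame and the single photon at Alice's position on each unsuccessful frame, with the two sets being statistically decoupled.

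With this in hand, I would finish as follows. Because photon number is preserved by the beamsplitter, whenever Bob detects his detection position must equal Alice's, so $K_A=K_B$ with probability one and no information-reconciliation beyond the announcement $\vect M$ is required. The raw keys are the concatenations of $\tilde X_{i_j}$, which are i.i.d. uniform on $\{1,\ldots,b\}$ and, by the previous paragraph, independent of $(\vect M,\sigma^{\mathbb{E}^k})$, so privacy amplification is also unnecessary and $H(K_A|\vect M,\sigma^{\mathbb{E}^k})=N_\textnormal{succ}\log b$, where $N_\textnormal{succ}$ is the number of successful frames. Since there are $\lfloor k/b\rfloor$ frames and each succeeds independently with probability $\eta$, this gives $R(\eta,\ave)=\eta\log b/b+o(1)$; plugging into~\eqref{eq:efficiency} and using $b\ave=1+O(\ave)$ yields
\[
	r_\textnormal{C-1}(\eta,\ave)=\frac{\log b}{b\,\ave}=\log\frac{1}{\ave}+o(1).
\]

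The only step requiring real care is the quantum bookkeeping of the second paragraph: making precise the claim that conditioning on the public announcement of successful-frame indices does not leak any position information to Eve, i.e., that her state on each such frame is literally the vacuum. Once this is settled, the rest is elementary counting and no privacy amplification or coded reconciliation is needed.
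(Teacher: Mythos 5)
Your proposal is correct and follows essentially the same route as the paper's proof: verify that the single photon per frame guarantees $K_A=K_B$, that Eve's post-measurement state on every selected frame is vacuum (so the key is independent of both her optical states and the publicly announced frame labels), and then count $\log b$ nats per detected photon to get $\log\frac{1}{\ave}+o(1)$. The only cosmetic difference is that you make the beamsplitter bookkeeping explicit via the output state $\sqrt{\eta}\,|1\ket^{\mathbb{B}}|0\ket^{\mathbb{E}}+\sqrt{1-\eta}\,|0\ket^{\mathbb{B}}|1\ket^{\mathbb{E}}$, whereas the paper invokes the photon-number-conservation facts it established earlier.
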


\begin{proof}
  We first verify that Scheme~\ref{Scheme_C-1} indeed
  generates a secret key. 
  To this end, first note that
  $\tilde{X}_{i_j}=\tilde{Y}_{i_j}$ for all $j\in\{1,2,\ldots\}$. This
  is because Alice sends only one non-vacuum state in each frame, and
  because Bob cannot detect any photon in a channel use where Alice 
  sends $|0\ket$. Hence the keys obtained by Alice and by Bob are the same. Second, by the way Alice chooses~$\tilde{\vect{X}}$,
  every $\tilde{X}_{i_j}$ (or,
  equivalently, $\tilde{Y}_{i_j}$) is uniformly distributed in
  $\{1,\ldots,b\}$, independently of $\tilde{X}_{i_{j'}}$ where $j'\neq
  j$. This shows that the key is uniformly distributed. It now remains to verify that the key is dependent neither on Eve's
  output states from the optical channel nor on the messages which Bob
  sends to Alice. It is independent of Eve's optical states because,
  in every selected frame, Bob detects the only photon that Alice
  transmits, so Eve's post-measurement state in this frame is the all-vacuum
  state. It is independent of Bob's messages because Bob only sends the labels of the selected frames to Alice, and because Alice chooses the
  pulse positions independently of the frame labels. 

We next compute the photon efficiency achieved by Scheme~\ref{Scheme_C-1}. Let $N(k)$ be the total number of frames selected by Bob within $k$ channel uses. Since each frame is selected when Bob detects a photon in that frame, which happens with probability $\eta$, we have from the Law of Large Numbers that
\begin{equation}
	\lim_{k\to\infty} \frac{N(k)}{k} = \lim_{k\to\infty}\frac{\eta \lfloor\nicefrac{k}{b}\rfloor}{k} = \eta \ave
\end{equation}
with probability one. Each detected photon (or, equivalently, each selected frame)
  provides $\log b$ nats of secret key. So, as $k$ tends to infinity, the achieved photon efficiency tends to
  \begin{equation}
    \lim_{k\to\infty} \frac{N(k)  \log b}{k  \eta \ave } = \log b =
    \log\frac{1}{\ave}+o(1). 
  \end{equation}

\end{proof}

\subsection{Alice Sends Coherent States}\label{sec:key_channel_coherent}

We now restrict Alice to sending coherent states since, as discussed previously, generating the number state $|1\ket$ is hard in practice. Under this restriction, Alice generates a sequence of complex numbers $\alpha_1,\alpha_2,\ldots,\alpha_k$ satisfying \eqref{eq:ave_coherent}, prepares the coherent states
$|\alpha_1\ket,|\alpha_2\ket,\ldots,|\alpha_k\ket$, and sends them over the
channel. As the next proposition shows, this restriction induces a loss of $\log\log\nicefrac{1}{\ave}$ in the photon efficiency, even if the scheme employed is more sophisticated than Scheme~\ref{Scheme_C-1}.

\begin{proposition}\label{prp:coherent_converse}
  The maximum photon efficiency in Model~C when Alice sends only
  coherent states and when Bob uses only direct detection satisfies
  \begin{equation}\label{eq:coherent}
    r_{\textnormal{coh}} (\eta,\ave) \le \log\frac{1}{\ave} -
    \log\log\frac{1}{\ave} + O(1)
  \end{equation}
  for all $\eta\in(0,1]$. 
\end{proposition}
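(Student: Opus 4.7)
The plan is to mirror the proof of Proposition~\ref{prp:C_max}, replacing the bound by the \emph{quantum} capacity of the channel with a bound by the capacity of the \emph{classical} channel induced by restricting Alice to coherent-state inputs and Bob to direct detection. The only new ingredient is a sharper capacity bound that captures the $\log\log$ penalty incurred by this restriction.

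First, I invoke the same general principle used to prove Proposition~\ref{prp:C_max}, namely \cite[Chapter~I, Theorem~5.1]{wilmink03}: the secret-key rate attainable with a given channel is at most its classical communication capacity, even when arbitrary restrictions are placed on the encoder and decoder. Under the restrictions stipulated in the proposition, the pure-loss bosonic channel~\eqref{eq:beamsplitter1} reduces to a memoryless classical channel in which Alice selects $|\alpha_i|^2\geq 0$ (with empirical mean at most $\ave$, by~\eqref{eq:ave_coherent}) and Bob observes $Y_i\sim\Poisson{\eta|\alpha_i|^2}$. Hence $R_\textnormal{coh}(\eta,\ave)$ is bounded by the capacity $C_\textnormal{P}(\eta\ave)$ of the classical Poisson channel $Y\sim\Poisson{Z}$ under the mean-input constraint $\E{Z}\leq\eta\ave$.

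Second, I invoke the asymptotic bound on $C_\textnormal{P}(\mu)$ in the low-$\mu$ regime established in \cite{wangwornell14}:
\[
  \frac{C_\textnormal{P}(\mu)}{\mu} \leq \log\frac{1}{\mu} - \log\log\frac{1}{\mu} + O(1), \quad \mu\downarrow 0.
\]
Setting $\mu=\eta\ave$ and using that for fixed $\eta\in(0,1]$ both $\log(1/\eta\ave)-\log(1/\ave)$ and $\log\log(1/\eta\ave)-\log\log(1/\ave)$ are $O(1)$, we obtain the desired bound~\eqref{eq:coherent} after dividing by $\eta\ave$.

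The principal technical content lies in the second step, which I simply cite rather than reproduce. The straightforward bound $H(Y)\leq g(\eta\ave)$ (via the geometric-distribution maximum-entropy principle applied to $\mathbb{Z}_0^+$-valued variables with given mean) would only recover the Holevo-type rate of Proposition~\ref{prp:C_max}, missing the $\log\log$ improvement. Obtaining the sharper bound requires pairing this upper bound on $H(Y)$ with a matching lower bound on $\E{H(\Poisson{\eta|\alpha|^2})}$ for any near-optimal input: at low mean, such inputs must concentrate mass near $0$ with a tail at amplitudes of order $\log(1/\ave)$, and those tail atoms contribute $\Theta(\eta\ave\log\log(1/\ave))$ to the conditional entropy of the output. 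Carrying out this calculation rigorously---which is essentially why the binary on-off achievability of Section~\ref{sec:comm_coherent} cannot be improved beyond the $O(1)$ term---is the content of \cite{wangwornell14}.
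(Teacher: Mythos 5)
Your proof is correct and follows essentially the same route as the paper's: bound the secret-key rate by the classical capacity of the induced Poisson channel under the mean constraint, then invoke the asymptotic capacity bound of \cite{wangwornell14} (the paper phrases the reduction via the Markov chain $X\markov |X|^2\markov Y$, which is equivalent to your direct identification of the channel input with $|\alpha|^2$). The only cosmetic difference is that you carry the constraint as $\E{Z}\le\eta\ave$ and absorb the resulting $\log(1/\eta)$ and $\log\log$ discrepancies into the $O(1)$ term, which is fine for fixed $\eta\in(0,1]$.
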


\begin{proof}
  We note  that, when Alice sends the coherent state $|\alpha\ket$, Bob's measurement
  outcome $Y$ has a Poisson distribution of mean
  $\eta|\alpha|^2$. We can bound the achievable secret-key rate as
  \begin{IEEEeqnarray}{rCl}
    R_\textnormal{coh} (\eta,\ave) & \le & \max_{\E{|X|^2} \le \ave} I(X;Y) \label{eq:coherent_2}\\
    & = & \max_{\E{|X|^2} \le \ave} I(|X|^2;Y), \label{eq:coherent_1}
  \end{IEEEeqnarray}
  where \eqref{eq:coherent_2} follows because the secret-key rate over a channel cannot be larger than the communication capacity of the channel (see, e.g., \cite{ahlswedecsiszar93}); and where \eqref{eq:coherent_1} follows because $|X|^2$ is a deterministic function of $X$, and because $X\markov |X|^2 \markov Y$
  forms a Markov chain. Finally, the right-hand side of \eqref{eq:coherent_1}, which is the maximum mutual information over a Poisson channel under an average-photon-number constraint, is shown in \cite{wangwornell14} to satisfy
  \begin{equation} \label{eq:from_seminal}
    \max_{\E{|X|^2} \le \ave} I(|X|^2;Y) \le \eta \ave \left\{ \log\frac{1}{\ave} -
      \log\log\frac{1}{\ave} +O(1) \right\}.
  \end{equation}
\end{proof}

We do not specify the $O(1)$ term, as the derivation of \eqref{eq:from_seminal} in\cite{wangwornell14} yields expressions that are rather involved. In the sequel we show that the bound~\eqref{eq:coherent} is tight within a constant term.

As in Section~\ref{sec:comm_coherent}, to simplify the coding task for the information-reconciliation step, Alice and Bob can use a PPM-based scheme. We choose the PPM frame-length to be:
      \begin{equation}\label{eq:blocklog}
        b\triangleq \left\lceil \frac{1}{\ave\log\nicefrac{1}{\ave}}
        \right\rceil.
      \end{equation}
This choice is optimal up to the order of approximation of interest. Note that $b$ in \eqref{eq:blocklog} is half the frame-length chosen for the communication setting, where the latter is $\lceil\nicefrac{1}{q^*(\ave)}\rceil$ with $q^*(\ave)$ given in~\eqref{p_star}.

\begin{scheme}\label{Scheme_C-2}
\mbox{}
  \begin{enumerate}
    \item 
      We divide
      the whole block of $k$ channel uses into frames each
      consisting of $b$ consecutive uses (and ignore the remainder).
    \item Alice generates a sequence of integers
      $\tilde{X}_1,\tilde{X}_2,\ldots$ i.i.d.
      uniformly in $\{1,\ldots,b\}$. Within the $i$th frame,
      $i\in\{1,2,\ldots\}$, she sends the
      coherent state $|\sqrt{b \ave}\ket$ in the $\tilde{X}_i$th
      channel use, 
      and sends the vacuum state 
      $|0\ket$ in all other channel uses.
    \item Bob makes direct detection on every channel-output. Since all channel input-states but one are in vacuum state, he will have detections in at most one output. Let the set of frames where Bob had a detection be denoted as $\{i_1,i_2,\ldots\}$, and denote the detection positions inside
      these bins by $\tilde{Y}_{i_1},\tilde{Y}_{i_2},\ldots$.  He tells Alice
      the values of $i_1,i_2,\ldots$ using the public channel.
    \item Alice generates the raw key $K_A$ from
      $\tilde{X}_{i_1},\tilde{X}_{i_2},\ldots$, and Bob generates the
      raw key $K_B$ from $\tilde{Y}_{i_1},\tilde{Y}_{i_2},\ldots$, both by
      directly taking the binary representation of these integers.
    \item Alice and Bob perform privacy amplification on their raw keys to
      obtain the secret keys.
  \end{enumerate}
\end{scheme}

The average-photon-number constraint \eqref{eq:ave_block} or \eqref{eq:ave_coherent} is clearly satisfied. Also note that, in this scheme,
  \begin{itemize}
    \item Alice's input states are binary: either $|0\ket$ or $|\sqrt{b \ave}\ket$;
%    \item Alice performs simple PPM;
    \item Bob's detector can be non-PNR;
    \item The information-reconciliation step is uncoded, and only involves one-way
      communication from Bob to Alice.
  \end{itemize}
In contrast to the restriction on Alice to sending only coherent states, which results in a loss of $\log\log \nicefrac{1}{\ave}$ in photon efficiency, the further simplifications employed in Scheme~\ref{Scheme_C-2} induce at most a constant-term loss.

\begin{proposition}\label{prp:C-2}
	Scheme~\ref{Scheme_C-2} achieves photon efficiency
	\begin{equation}
		r_{\textnormal{C-2}} (\eta,\ave) \ge \log\frac{1}{\ave} - \log\log\frac{1}{\ave} - (1-\eta) + o(1)
	\end{equation}
	for all $\eta\in(0,1]$.
\end{proposition}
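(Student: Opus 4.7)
The proof parallels that of Proposition~\ref{prp:C-1}, but two new steps are needed because Alice now sends coherent states rather than the number state $|1\ket$: Eve retains a nontrivial coherent state at the pulse position, so privacy amplification is genuinely required and the key length is governed by $H(K_A|\vect{M},\sigma^{\mathbb{E}^k})$ via~\eqref{eq:rennerkonig}; and Bob misses each pulse with probability $e^{-\eta b\ave}$, so frame selection is probabilistic rather than deterministic. Set $L\eqdef\log(1/\ave)$ for brevity, so $b=\lceil 1/(\ave L)\rceil$.

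First I would verify correctness and count the raw-key rate. In each frame Alice transmits $|\sqrt{b\ave}\ket$ at position $\tilde X_i$ and vacuum elsewhere; by \eqref{eq:beamsplitter} Bob receives $|\sqrt{\eta b\ave}\ket$ at position $\tilde X_i$ and vacuum at the other $b-1$ positions, so his direct-detection outcome can be nonzero only at position $\tilde X_i$. Hence whenever Bob selects a frame, $\tilde Y_{i_j}=\tilde X_{i_j}$ deterministically and $K_A=K_B$. The Law of Large Numbers then makes the fraction of selected frames converge to $p_{\text{sel}}=1-e^{-\eta b\ave}$, and each selected frame contributes $\log b$ nats of uniformly distributed raw key.

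Next I would upper-bound Eve's per-frame Holevo information. Conditional on $\tilde X=x$, Eve holds the pure product state carrying $|\sqrt{(1-\eta)b\ave}\ket$ at position $x$ and vacuum elsewhere; since Bob's detection probability is independent of $x$, conditioning on frame selection leaves this ensemble unchanged. The $b$ conditional states are mutually non-orthogonal with pairwise overlap $\mu=e^{-(1-\eta)b\ave}$, so the standard Gram-matrix argument gives $\bar\rho_E=(1/b)\sum_x|\psi_x\ket\bra\psi_x|$ the eigenvalues $(1+(b-1)\mu)/b$ (once) and $(1-\mu)/b$ ($b-1$ times). Since each $|\psi_x\ket$ is pure, the Holevo information equals $\chi=S(\bar\rho_E)=H_2(P)+P\log(b-1)$ with $P=(b-1)(1-\mu)/b$. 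Because $(1-\eta)b\ave\to 0$, one finds $P=(1-\eta)/L+o(1/L)$, and a direct expansion gives $P\log(b-1)=(1-\eta)-(1-\eta)\log L/L+o(1/L)$ while $H_2(P)=(1-\eta)\log L/L+O(1/L)$; the $\log L/L$ contributions cancel, leaving $\chi=(1-\eta)+o(1)$.

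Finally I would assemble the pieces. Since the raw-key symbols in different selected frames are conditionally independent given $\vect{M}$ (each depending only on Eve's state in its own frame), the Renner--K\"onig bound~\eqref{eq:rennerkonig} factorizes across frames to yield a secret-key length asymptotic to $N_{\text{sel}}(\log b-\chi)$ nats. Normalizing by $k$ channel uses and by $\eta\ave$ gives $r_{\text{C-2}}=\bigl(p_{\text{sel}}/(\eta b\ave)\bigr)(\log b-\chi)$, and plugging in $(1-e^{-\eta b\ave})/(\eta b\ave)\to 1$, $\log b=L-\log L+o(1)$, and the Holevo asymptotic $\chi=(1-\eta)+o(1)$ yields the claimed lower bound. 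The most delicate step is the Holevo calculation: the two pieces $H_2(P)$ and $P\log(b-1)$ are each of order $\log L/L$, and only their cancellation leaves the clean constant $(1-\eta)$ in the limit.
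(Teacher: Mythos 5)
Your proposal is correct and follows the same overall architecture as the paper's proof: verify that the raw keys agree deterministically and are independent of the reconciliation messages, reduce to a single selected frame by independence across frames, bound Eve's per-frame information about the pulse position, and multiply by the selection probability $1-\e^{-\eta b\ave}$ normalized by $\eta b \ave$. The one step where you genuinely diverge is the evaluation of Eve's per-frame information $I(\tilde{X};\rho^{\mathbb{E}^b})$. The paper upper-bounds it by the entropy of Eve's average state and then invokes the maximum-entropy property of thermal states for a $b$-mode bosonic state with mean photon number $(1-\eta)b\ave$, which yields $(1-\eta)+o(1)$ in one line from a cited result. You instead compute this entropy exactly: the $b$ conditional states are pure with constant pairwise overlap $\mu=\e^{-(1-\eta)b\ave}$, so the Gram-matrix spectrum gives $\chi=H_2(P)+P\log(b-1)$ with $P=(b-1)(1-\mu)/b$, and the expansion again yields $(1-\eta)+o(1)$. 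Your route is more elementary and self-contained (no appeal to the bosonic max-entropy theorem) and shows the paper's bound is asymptotically tight for this ensemble, at the cost of being tied to the symmetric pure-state structure; the paper's route is shorter and would survive perturbations of the ensemble. One small remark: the cancellation of the $\log L/L$ terms that you flag as delicate is not actually needed, since each of $H_2(P)$ and the $\log L/L$ correction to $P\log(b-1)$ is individually $o(1)$, so $\chi=(1-\eta)+o(1)$ follows without it.
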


The proof, which appears in Appendix~\ref{app:C-2}, is more involved than that of Scheme~\ref{Scheme_C-1}, since in the case of coherent states, the raw key depends upon Eve's optical states (since, if Bob and Eve both see detections in some frame, then they must be in the same location). However, we bound the information leakage and show that it leads to at most a constant key-efficiency loss. 

%%%%%%%%%%%%%%%%%%%%%%%%%%%%%%%%%%%%%%%%%%%%%%

\section{Key Distribution in Model S}\label{sec:source_case}

In this section we study the key-distribution problem in Model S, which we set up in Section~\ref{sec:key_source_setup}. Apart from Section~\ref{sec:etaA}, we shall focus on the case where $\eta_A=1$. In this case, we omit the subscript of $\eta_B$ to denote it simply as~$\eta$. 

%Let $\sigma$ be the result of passing the temporally-entangled state \eqref{eq:source_model} through the beamsplitters \eqref{eq:beamsplitterA} and \eqref{eq:beamsplitterB}, where the first beamsplitter is now trivial. Provided that Alice and Bob can perform the optimal quantum measurements on $\mathbb{A}$ and on $\mathbb{B}$, respectively, we have the following simple expression and asymptotic upper bound (which we later show to be tight within a constant term) for their maximum key rate.

\begin{proposition}\label{prp:IAB}
	The maximum photon efficiency achievable in Model~S satisfies
	\begin{equation} \label{eq:efficiency_Squantum}
	r_\textnormal{quantum}(\eta,\ave) \le \log{\frac{1}{\eta \ave}} + 1 + o(1).
	\end{equation}
\end{proposition}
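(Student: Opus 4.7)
The plan is to bound the secret-key rate in Model~S by the von Neumann entropy of Bob's marginal state, $H(\sigma^{\mathbb{B}})$, and then asymptotically expand this entropy. This parallels how Proposition~\ref{prp:C_max} bounds the rate in Model~C by the communication capacity.

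First, I would reduce the problem to a Model-C-like setting by having Alice perform a POVM on her entire system $\mathbb{A}^k$. By the principle of deferred measurement, every quantum operation on her side can be absorbed into a single POVM that produces a classical outcome $X$, after which the rest of the protocol becomes purely classical. Conditional on $X=x$, Bob's state is $\sigma_x^{\mathbb{B}^k}$ and Eve's is $\sigma_x^{\mathbb{E}^k}$, so we are left with a classical-quantum wiretap channel $x\mapsto(\sigma_x^{\mathbb{B}^k},\sigma_x^{\mathbb{E}^k})$. Arguing as in Proposition~\ref{prp:C_max}, the secret-key rate over this induced channel is upper bounded by its classical communication capacity from $X$ to Bob, which by the Holevo--Schumacher--Westmoreland theorem equals the Holevo information.

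Next, I would bound the Holevo information by Bob's marginal entropy: $\chi\bigl(\{P_X,\sigma_x^{\mathbb{B}^k}\}\bigr) = H\bigl(\sigma^{\mathbb{B}^k}\bigr) - \sum_x P(x)\,H\bigl(\sigma_x^{\mathbb{B}^k}\bigr) \le H\bigl(\sigma^{\mathbb{B}^k}\bigr) = k\,H(\sigma^{\mathbb{B}})$, using that Alice's local measurement leaves Bob's marginal unchanged (so $\sum_x P(x)\sigma_x^{\mathbb{B}^k}=\sigma^{\mathbb{B}^k}$) and that the source is i.i.d. Hence, per source use, $R(\eta,\ave)\le H(\sigma^{\mathbb{B}})$. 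From \eqref{eq:sigmaBE}--\eqref{eq:sigmaE}, $\sigma^{\mathbb{B}}$ is diagonal in the number basis with Poisson weights of mean $\eta\ave$, so $H(\sigma^{\mathbb{B}})$ equals the Shannon entropy of $\textnormal{Poi}(\eta\ave)$.

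Finally, a direct expansion yields $H(\textnormal{Poi}(\lambda))=\lambda\log(1/\lambda)+\lambda+O(\lambda^2)$ as $\lambda\downarrow 0$ (the correction $\sum_n P(n)\log n!$ begins at order $\lambda^2$, since $\log n!$ vanishes for $n\in\{0,1\}$). Specializing to $\lambda=\eta\ave$ and dividing by $\eta\ave$ then gives the claimed bound. The main potential obstacle is the reduction in the first step: one must argue carefully that, despite the two-way public classical channel and possibly adaptive quantum strategies, every protocol admits an equivalent form in which Alice measures once up front and the rest is classical. This is a standard consequence of the principle of deferred measurement, but its interaction with interactivity needs to be handled with care.
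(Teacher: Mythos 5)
Your proof is correct, but it follows a genuinely different route from the paper's. The paper disposes of Proposition~\ref{prp:IAB} by a simulation argument: whatever measurement Alice performs on $\mathbb{C}^k$ in Step~1' can be replicated in Model~C by having her draw random numbers with the statistics of her measurement outcomes and then \emph{prepare} the corresponding post-measurement states on $\mathbb{D}$, which reproduces the joint correlations among Alice, Bob, and Eve; the bound then follows verbatim from Proposition~\ref{prp:C_max} and the capacity formula $g(\eta\ave)$. You instead stay inside the source model, invoke the source-model analogue of the key-rate bound (secret-key rate at most the Holevo information $I(X;\mathbb{B}^k)$ between Alice's measurement outcome and Bob's quantum system), and then bound that by $k\,H(\sigma^{\mathbb{B}})$, the entropy of a $\textnormal{Poi}(\eta\ave)$ distribution. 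Both chains give the same first-order asymptotics; yours is in fact marginally tighter, since $H(\textnormal{Poi}(\lambda))\le g(\lambda)$ (the geometric distribution maximizes entropy at fixed mean), though the gap is $O(\lambda^2)$ and invisible at the order claimed. What the paper's route buys is economy — it reuses Proposition~\ref{prp:C_max} and the known capacity $g(\eta\ave)$ without any new entropy computation; what yours buys is independence from the Model-C reduction, at the price of needing the quantum source-model bound $R\le I(X;\mathbb{B})/k$, which is of exactly the flavor the paper cites from Wilmink elsewhere (in the proof of Proposition~\ref{prp:S-1}), so it is not an extra leap. Your closing worry about adaptivity is largely moot in this paper's framework: the Model~S protocol as set up in Section~\ref{sec:key_source_setup} already fixes that Alice and Bob measure in Steps~1' and~2' before any public discussion, so the ``measure once up front'' form is part of the problem statement rather than something to be derived.
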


\begin{proof}
We note that, without further constraints, the secret-key rate and hence the photon efficiency achievable in Model~S cannot exceed those achievable in Model~C. This is because any measurement Alice performs in Step~1' in Model~S, which is described in Section~\ref{sec:key_source_setup}, can be simulated in Model~C in the following way. Alice first generates random numbers that have the same statistics as the outcomes of the measurement that she would perform in Model~S. Then, for each number, she generates the corresponding post-measurement state on $\mathbb{D}$ and sends it to Bob. Doing these will generate the same correlation between Alice, Bob, and Eve as the corresponding strategy in Model~S would do. The claim now follows immediately from Proposition~\ref{prp:C_max}.
\end{proof}

\emph{Note:} The above proof says that, when Alice and Bob can both use fully quantum devices, there is no advantage in Model~S over Model~C. However, as we later show, this need not be the case when Alice and Bob are restricted, e.g, to direct detection.

For practicality, for the rest of this section we restrict both Alice and Bob to using only direct detection on their quantum states. In fact, Alice and Bob will only use non-PNR direct detection. In contrast, we do not impose any constraint on Eve's measurement, thus our schemes are secure against a fully-quantum (though passive) Eve.

\subsection{Direct Detection Combined with Optimal Binary Slepian-Wolf Codes}

After Alice and Bob perform direct detection on their optical states, each of them has a binary sequence where $1$ indicates photons are detected in the corresponding source use. Denote their sequences by $\vect{A}$ and $\vect{B}$, respectively. Due to our source model, $\vect{A}$ and $\vect{B}$ are distributed i.i.d. in time, while each pair $(A,B)$ has joint distribution according to a Z channel with 
\begin{subequations}\label{eq:PAB}
\begin{IEEEeqnarray}{rCl}
	q &\triangleq& P_A(1)  =  1 - \e^{-\ave}\\
	\mu & \triangleq& P_{B|A}(1|1)  =  \frac{1-\e^{-\eta \ave}}{1-\e^{-\ave}}.
\end{IEEEeqnarray}
\end{subequations}
Bob can help Alice to know $\vect{B}$ by sending her a Slepian-Wolf code \cite{slepianwolf73}. For the moment, we assume that Alice and Bob have an optimal Slepian-Wolf code for the joint distribution $P_{AB}$ (Later we drop this assumption to find more realistic code constructions.) Then they can use the following key-distribution scheme.

\setcounter{sc}{19}
\setcounter{scheme}{0}

\begin{scheme}\label{Scheme_S-1}
\mbox{}
\begin{enumerate}
	    \item Alice and Bob perform non-PNR direct detection to obtain binary sequences $\mathbf{A}$ and $\mathbf{B}$, respectively.
	\item Bob sends Alice an optimal Slepian-Wolf code so that Alice knows $\vect{B}$ with high probability. They use $\vect{B}$ as the raw key.
	\item Alice and Bob perform privacy amplification on $\vect{B}$ to obtain the secret key.
\end{enumerate}
\end{scheme}

The key rate and photon efficiency of Scheme~\ref{Scheme_S-1} satisfy the following.
\begin{proposition}\label{prp:S-1}
	Scheme~\ref{Scheme_S-1} achieves the key rate
	\begin{equation}\label{eq:S-1_IAB}
		R_\textnormal{S-1}(\eta,\ave) = I(A;B)
	\end{equation}
	where the mutual information is computed on the joint distribution $P_{AB}$ given by \eqref{eq:PAB}.
	Furthermore, for all $\eta\in(0,1]$, the photon efficiency of Scheme~\ref{Scheme_S-1} satisfies
	\begin{equation}\label{eq:S-1_efficiency}
		r_\textnormal{S-1}(\eta,\ave) = \log\frac{1}{\eta \ave} + 1 - \frac{H_2(\eta)}{\eta} + o(1).
	\end{equation}
\end{proposition}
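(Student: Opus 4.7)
The plan is to establish the two claims separately: first the key-rate identity $R_{\textnormal{S-1}}=I(A;B)$, which follows from a clean quantum-independence argument combined with the standard Slepian-Wolf theorem, and then the asymptotic expansion of the photon efficiency, which is a direct Taylor expansion of the resulting Z-channel mutual information.

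For the key-rate claim, I would begin by pinning down the joint law $P_{AB}$. Using the source model \eqref{eq:source_model} together with the beamsplitter rule \eqref{eq:number_direct} and $\eta_A=1$, Alice's photon count is $\textnormal{Poisson}(\ave)$, giving $P_A(1)=1-e^{-\ave}$; conditional on $n$ photons at the source, Bob detects none with probability $(1-\eta)^n$, giving $P_B(0)=e^{-\eta\ave}$. The constraint $P_{AB}(0,1)=0$ is automatic because $A=0$ forces $n=0$, confirming \eqref{eq:PAB}. I would then invoke the Renner-K\"onig formula \eqref{eq:rennerkonig} with $K_A=\mathbf{B}$ and $\vect{M}$ the Slepian-Wolf message. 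The decisive ingredient, already established in Section~\ref{sec:key_source_setup} around \eqref{eq:indep_theta}, is that after direct detection on $\mathbb{B}$ Eve's post-measurement state on $\mathbb{E}$ is independent of the outcome. Extending across the $k$ i.i.d.\ source uses gives $\mathbf{B}\indep\sigma^{\mathbb{E}^k}$, and since $\vect{M}$ is a deterministic function of $\mathbf{B}$, the pair $(\mathbf{B},\vect{M})$ is jointly independent of $\sigma^{\mathbb{E}^k}$ as well. Therefore
\begin{equation*}
H(\mathbf{B}\mid\vect{M},\sigma^{\mathbb{E}^k})=H(\mathbf{B}\mid\vect{M})=H(\mathbf{B})-H(\vect{M}),
\end{equation*}
and for an asymptotically optimal Slepian-Wolf code for the i.i.d.\ source $P_{AB}^{\otimes k}$, $H(\vect{M})/k\to H(B\mid A)$. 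Dividing by $k$ then yields $R_{\textnormal{S-1}}=H(B)-H(B\mid A)=I(A;B)$.

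For the photon-efficiency claim, I would evaluate $I(A;B)=H_2(q\mu)-qH_2(\mu)$ on the parameters from \eqref{eq:PAB} and Taylor expand as $\ave\downarrow 0$. Since $q\mu=1-e^{-\eta\ave}=\eta\ave+O(\ave^2)$, $q=\ave+O(\ave^2)$, and $\mu\to\eta$, the small-argument expansion $H_2(x)=x\log(1/x)+x+O(x^2)$ gives $H_2(q\mu)=\eta\ave\log(1/(\eta\ave))+\eta\ave+o(\ave)$ and $qH_2(\mu)=\ave H_2(\eta)+O(\ave^2)$, and dividing by $\eta\ave$ reproduces \eqref{eq:S-1_efficiency}. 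The only genuinely nontrivial step is the independence $\mathbf{B}\indep\sigma^{\mathbb{E}^k}$, which is what makes the Slepian-Wolf side information \emph{free} in the sense that it leaks only the unavoidable $H(B\mid A)$ nats per source use; once this is in hand, everything else reduces to the classical Slepian-Wolf theorem and elementary calculus.
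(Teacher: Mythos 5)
Your proof is correct, and the two halves line up with the paper as follows. The photon-efficiency expansion is identical to the paper's: both evaluate $I_Z(q,\mu)$ on the parameters of \eqref{eq:PAB} and Taylor-expand to get \eqref{eq:S-1_last}. For the key-rate identity the route is genuinely different in presentation, though it rests on the same crucial fact: the paper cites two theorems from the Maurer--Wolf/Wilmink framework (a converse bounding any secret-key rate by $I(A;B)$ and an achievability result giving $I(A;B)-I(B;\mathbb{E})$) and then observes $I(B;\mathbb{E})=0$; you instead compute the Renner--K\"onig key length \eqref{eq:rennerkonig} directly, using $\mathbf{B}\indep\sigma^{\mathbb{E}^k}$ together with the fact that $\vect{M}$ is a deterministic function of $\mathbf{B}$ to collapse $H(\mathbf{B}\mid\vect{M},\sigma^{\mathbb{E}^k})$ to $H(\mathbf{B})-H(\vect{M})$. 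This is essentially an unpacking of the cited achievability theorem and is, if anything, more self-contained given that the paper already adopts \eqref{eq:key_rate} as its operational formula. Two small remarks: (i) your step $H(\vect{M})/k\to H(B\mid A)$ needs both directions --- the upper bound from the Slepian--Wolf coding theorem (message length $\approx kH(B\mid A)$) and the lower bound from its converse via Fano, $H(\vect{M})\ge H(\vect{M}\mid\mathbf{A})\ge H(\mathbf{B}\mid\mathbf{A})-o(k)$; you assert the limit without the second half, though it is standard; (ii) the paper's cited converse bounds \emph{every} protocol by $I(A;B)$, which is a stronger statement than your scheme-specific equality, but since the proposition only asserts what Scheme~\ref{Scheme_S-1} achieves, your argument suffices as written.
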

\begin{proof}
We first prove \eqref{eq:S-1_IAB}. Its converse part follows immediately from \cite[Chapter I, Theorem 5.3]{wilmink03}, which states that the secret-key rate cannot exceed $I(A;B)$ even if Eve possesses no quantum state that is correlated to $A$ and $B$. Its achievability part follows from \cite[Chapter III, Theorem 2.2]{wilmink03}: when we eliminate the ``helper subalgebra'', the theorem says that the forward key capacity (i.e., the maximum key rate achievable when Alice does not communicate to Bob) is lower-bounded by $I(A;B) - I(B;\mathbb{E})$ evaluated for the joint state consisting of Alice's and Bob's measurement outcomes and Eve's post-measurement state. As shown in Section~\ref{sec:key_source_setup}, Bob's measurement outcome is independent of Eve's post-measurement state, so $I(B;\mathbb{E})=0$.\footnote{In Section~\ref{sec:key_source_setup} we consider the case where Bob performs a complete projective measurement in the number-state basis, whereas here Bob's non-PNR detection only distinguishes between zero and positive photon numbers. But extending our claim for the former case to the latter is straightforward.} %This yields the achievability part of \eqref{eq:S-1_IAB}.

We next prove \eqref{eq:S-1_efficiency}. Direct evaluation for the Z-channel mutual information \eqref{I_Z} for the channel parameters $q$ and $\mu$ of \eqref{eq:PAB} gives:
\begin{IEEEeqnarray}{rCl} 
	I(A;B) & = & I_Z(q,\mu) \\
	& = & H_2(\e^{-\eta \ave}) - \left(1-\e^{-\ave}\right) H_2\left(\frac{1-\e^{-\eta \ave}}{1-\e^{-\ave}}\right) \IEEEeqnarraynumspace\\
	& = & \eta \ave\log\frac{1}{\eta \ave} + \eta \ave-\ave H_2(\eta)+o(\ave). \label{eq:S-1_last} 
\end{IEEEeqnarray}
Substituting in  \eqref{eq:S-1_IAB} and dividing by $\eta \ave$ yields \eqref{eq:S-1_efficiency}.
\end{proof}

%\emph{Note:} If both Alice and Bob make PNR detection, then Eve's optical state ``jumps'' into a sequence of uncorrelated number states, which is equivalent to a sequence of integers that is independent of Bob's sequence. The above proof can then be based on the classical version of \eqref{eq:wilmink} which is in \cite{ahlswedecsiszar93}.

Hence the conceptually simple Scheme~\ref{Scheme_S-1}, which only uses non-PNR direct detection both at Alice and at Bob, is at most a constant term away from the optimal quantum efficiency whose upper bound is given in \eqref{eq:efficiency_Squantum} . Comparing this with \eqref{holevo_eff} and \eqref{eq:PE_channel_max} we see that the differences between the optimal photon efficiencies in communication, in Model~C, and in Model~S are at most constants. Interestingly, $r_\textnormal{S-1}(\eta,\ave)$ is asymptotically the same as the photon efficiency in the communication scenario where Alice sends binary number states \eqref{eq:numZ}. 

The problem with Scheme~\ref{Scheme_S-1} is, though, that the source distribution $P_{AB}$ is highly skewed, which makes it difficult to find a good Slepian-Wolf code, much like the difficulty to obtain a channel code in the communication setting of Section~\ref{sec:communication}. While in communication and in Model~C Alice can use PPM to simplify code design, in Model~S this is no longer possible, as the sequences $\vect{A}$ and $\vect{B}$ are governed by the source, over which neither Alice nor Bob have control. Nevertheless, Alice and Bob can use a PPM-like scheme by \emph{parsing} the sequences into frames, as we next propose.

\subsection{Simple Frame-Parsing}

In a simple PPM-like scheme, Alice and Bob parse the source uses into frames, and only use the frames where each of them has exactly one detection to generate the key.

\begin{scheme}\label{Scheme_S-2}
\mbox{}
\begin{enumerate}
	    \item Alice and Bob perform non-PNR direct detection to obtain binary sequences $\mathbf{A}$ and $\mathbf{B}$, respectively.
	\item Let $b$ be as in \eqref{eq:blocklog}. We divide the whole block of $k$ source uses into frames each consisting of $b$ consecutive uses (and ignore the remainder).
	\item Bob selects all the frames in which he detects at least one photon ($B=1$ for at least one source use). Denote the labels of these frames by $\{i_1,i_2,\ldots\}$. He tells Alice the values of $i_1,i_2,\ldots$ using the public channel.
	\item Alice selects the frames among $i_1,i_2,\ldots$ in which $A=1$ for \emph{exactly one} source use. Denote the labels of these frames by $\{i_{j_1},i_{j_2}\ldots\}$, Alice's detection positions within these frames by $\{Y_{i_{j_1}},Y_{i_{j_2}},\ldots\}$, and Bob's (unique) detection positions within these frames by $\{X_{i_{j_1}},X_{i_{j_2}},\ldots\}$. She tells Bob the values of $j_1,j_2,\ldots$ using the public channel.
	\item Alice and Bob generate the raw key by taking the binary representations of $\{X_{i_{j_1}},X_{i_{j_2}},\ldots\}$ and of $\{Y_{i_{j_1}},Y_{i_{j_2}},\ldots\}$, respectively.
	\item Alice and Bob perform privacy amplification on the raw key to obtain the secret key.
\end{enumerate}
\end{scheme}

As in Schemes~\ref{Scheme_C-1} and \ref{Scheme_C-2}, the information-reconciliation step in Scheme~\ref{Scheme_S-2} is uncoded and hence very simple. The performance of Scheme~\ref{Scheme_S-2} is similar to that of Scheme~\ref{Scheme_C-2} where Alice sends coherent states, in the sense that it loses a $\log\log\nicefrac{1}{\ave}$ term in photon efficiency compared to the optimum \eqref{eq:efficiency_Squantum}. Interestingly, here the loss does not come from the input states used, as they are identical to those in Scheme~\ref{Scheme_C-1}, but rather from the parsing process.

\begin{proposition}\label{prp:S-2}
	The photon efficiency of Scheme~\ref{Scheme_S-2} satisfies
	\begin{equation}\label{eq:S-2}
	 	r_{\textnormal{S-2}}(\eta,\ave) = \log\frac{1}{\ave} - \log\log\frac{1}{\ave} - 1 + o(1).
	\end{equation}
\end{proposition}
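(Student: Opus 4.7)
My plan is to establish correctness of key agreement, bound Eve's information on the raw key, and then expand everything asymptotically as $\ave \downarrow 0$.

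First I would verify correctness. The Z-channel structure \eqref{eq:PAB} has $P_{B|A}(1|0)=0$, so Bob can detect at a source use only if Alice does. In any frame selected by Scheme~\ref{Scheme_S-2}---one where Alice has a single $A=1$ at some position $X$ and Bob has at least one $B=1$---Bob's detection must lie at $X$. Hence $X_{i_{j_\ell}}=Y_{i_{j_\ell}}$ with probability one, so the raw keys agree. Next I would compute the selection statistics. Since the source uses within a frame are i.i.d. under \eqref{eq:PAB}, the per-frame selection probability factors as
\begin{equation*}
p_\textnormal{sel}=bq(1-q)^{b-1}\mu,
\end{equation*}
namely the probability that Alice has exactly one detection, times the conditional probability $\mu$ that Bob also detects at that one position. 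By symmetry, $X$ is uniform on $\{1,\ldots,b\}$ conditional on selection, so each selected frame contributes $\log b$ raw-key nats; by the law of large numbers, the density of selected frames per source use tends almost surely to $p_\textnormal{sel}/b=q(1-q)^{b-1}\mu$.

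The main obstacle is secrecy, since Eve's state is quantum. My argument runs as follows. At every position $t\ne X$ in a selected frame, $A_t=0$ forces the source photon count $i_t=0$, so Eve's reduced state there is pure vacuum, carrying no information about $X$. At position $X$, conditional on $A_X=1$ and $B_X=1$, Eve's reduced state is vacuum when $i_X=1$ (Bob absorbed the single emitted photon) and nontrivial only when $i_X\ge 2$. A short Poisson computation gives $\Pr[i_X\ge 2\mid A_X=1,B_X=1]=O(\ave)$, hence Eve has a photon at $X$ with probability $p_E=O(\ave)$ conditional on selection; with probability $1-p_E$ her entire frame-state is pure vacuum and independent of $X$. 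Because Bob's and Alice's public messages announce only which frames are selected (not positions within them), we obtain $H(X\mid M,\sigma^\mathbb{E})\ge(1-p_E)\log b$ per selected frame, and \eqref{eq:rennerkonig} then delivers at least this many secret nats after privacy amplification.

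Finally I would carry out the asymptotics. Substituting $q=\ave+O(\ave^2)$ and $\mu=\eta+O(\ave)$ from \eqref{eq:PAB}, and using the choice $b=\lceil 1/(\ave\log(1/\ave))\rceil$ to get $(b-1)q=1/\log(1/\ave)+O(\ave)$ and $(1-q)^{b-1}=1-1/\log(1/\ave)+O(1/\log^2(1/\ave))$, together with $\log b=\log(1/\ave)-\log\log(1/\ave)+o(1)$, a short algebraic expansion yields
\begin{equation*}
(1-q)^{b-1}\log b=\log\frac{1}{\ave}-\log\log\frac{1}{\ave}-1+o(1).
\end{equation*}
The prefactor $q\mu(1-p_E)/(\eta\ave)=1+O(\ave)$ leaves these leading terms unchanged (since $O(\ave)\cdot\log(1/\ave)=o(1)$), so dividing $R_\textnormal{S-2}=q(1-q)^{b-1}\mu(1-p_E)\log b+o(\ave)$ by $\eta\ave$ produces the claimed expression for $r_\textnormal{S-2}$.
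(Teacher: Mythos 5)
Your proof is correct, and its overall architecture---verify that the raw keys coincide via the Z-channel structure, compute the per-frame selection probability, bound Eve's information about the pulse position within a selected frame, and expand asymptotically---matches the paper's. The one step where you genuinely deviate is the secrecy bound. The paper notes that Eve's state over a selected frame, $\sigma^{\mathbb{E}^b}$, has mean photon number $(1-\eta)\ave$ and invokes the maximum-entropy property of thermal states \cite{holevosohmahirota99} to get $H(\sigma^{\mathbb{E}^b})\le b\, g\bigl((1-\eta)\ave/b\bigr)=o(1)$, whence $I(\tilde{X};\sigma^{\mathbb{E}^b})=o(1)$. You instead condition on the number of source pairs $i_X$ at the detection position, observe that given selection $i_X=1$ holds with probability $1-O(\ave)$, and that on this event Eve's frame state is exactly the vacuum independently of $\tilde{X}$, concluding $H(\tilde{X}\mid \sigma^{\mathbb{E}^b})\ge (1-O(\ave))\log b=\log b-o(1)$. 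Your route is more elementary (no von Neumann entropy estimate is needed) and exploits the classical-mixture structure of the simplified source model \eqref{eq:source_model}; the paper's entropy bound is the same tool it reuses in the Model-C analysis and is less sensitive to the exact form of the conditional states. A second, minor difference: you compute the exact selection probability $bq(1-q)^{b-1}\mu$ from \eqref{eq:PAB}, whereas the paper lower-bounds it by the probability that Bob counts exactly one photon while Eve counts none; both reduce to $\eta b\ave\bigl(1-b\ave+o(b\ave)\bigr)$ and yield the same constant $-1$ after multiplication by $\log b$.
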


The scheme has some information leakage, since Eve can use her knowledge about the frames which were selected for key generation (obtained by listening to the public channel), in conjunction with the measurements she performs on the same frames. The proof, which appears in Appendix~\ref{app:S-2}, shows that this leakage is vanishing in the photon-efficient limit.

\emph{Note:} If Alice uses PNR direct detection (which is technically more difficult than non-PNR), then Scheme~\ref{Scheme_S-2} can be simplified so that it does not contain an privacy-amplification step. Indeed, Alice can select those frames in which she detects \emph{only one photon}. In this case, since Bob also detects photons (in fact, only one photon) in every such frame, we know that Eve's post-measurement states in these frames are all vacuum. Hence Eve has no information about $\tilde{X}$, and taking the binary representation of $\tilde{X}$ already gives Alice and Bob a secret key.

The information loss of Scheme~\ref{Scheme_S-2} compared to Scheme~{\ref{Scheme_S-1}} comes from two sources. First, the sequence $\{i_1,i_2,\ldots\}$ itself contains useful information that can be used to generate secret bits, but is not exploited in Scheme~\ref{Scheme_S-2}. Second, frames in which Alice detects photons in two or more source uses are discarded. As it turns out, the first source of information loss is dominant in the photon-efficient regime; we next show how this loss can be recovered. (Loss from the second source can also be partially recovered, e.g., by varying the frame-lengths~\cite{zhouwornell13}.)

\subsection{Enhanced Frame-Parsing}

Our idea of enhancing the frame-parsing scheme~\ref{Scheme_S-2} is to extract secret-key bits also from the sequence $\{i_1,i_2,\ldots\}$, which indicates the positions of frames selected by Bob. To this end, instead of sending this sequence uncoded, Bob uses a binary Slepian-Wolf code to send this information to Alice. Note that such a code is much easier to construct than the one in Scheme~\ref{Scheme_S-1}, as the zeros (frames not selected by Bob) and ones (frames selected by Bob) are much more balanced than in the original binary sequence $\vect{B}$; recall \eqref{eq:blocklog}.  Assuming that an optimal Slepian-Wolf can be found, we can completely recover the $\log\log\nicefrac{1}{\ave}$ term and reduce the loss in photon efficiency to a constant term.

\begin{scheme}\label{Scheme_S-3}
\mbox{}
\begin{enumerate}
	    \item Alice and Bob use non-PNR direct detection to obtain binary sequences $\mathbf{A}$ and $\mathbf{B}$, respectively.
	\item Let $b$ be as in \eqref{eq:blocklog}. We divide the whole block of $k$ source uses into frames each consisting of $b$ consecutive uses (and ignore the remainder).
	\item Let $\tilde{B}_i$ be the indicator that Bob detects at least one photon within the $i$th frame, and let $\tilde{A}_i$ be the same indicator for Alice. Bob sends a Slepian-Wolf code to Alice using the public channel, so that Alice can recover $\tilde{\vect{B}}$ based on the codeword together with $\vect{\tilde{A}}$ with high probability.
	\item Corresponding to every $i$ such that $\tilde{B}_i=1$, Alice sends a binary symbol $C_i$ to Bob: $C_i=1$ if within the $i$th frame there is \emph{exactly} one source use where $A=1$, and $C_i=0$ otherwise. Note that since Alice knows $\vect{\tilde{B}}$ with high probability, she can send $C_i$s simply as a bitstream in an increasing order in $i$ (and skip the $i$s for which $\tilde{B}_i=0$).
	\item Alice and Bob perform privacy amplification on $\vect{\tilde{B}}$ to obtain the first part of the secret key.
	\item For every $i$ such that $\tilde{B}_i=C_i=1$, let $X_i$ be the position where $A=1$, and let $Y_i$ be the (unique) position where $B=1$.  Alice and Bob generate the second part of the secret key by taking the binary representations of $X_i$ and of $Y_i$, respectively, for all such $i$s, and by then performing privacy amplification.
\end{enumerate}
\end{scheme}

\begin{proposition}\label{prp:S-3}
	Scheme~\ref{Scheme_S-3} achieves photon efficiency
	\begin{equation}\label{eq:S-3}
		r_\textnormal{S-3} (\eta,\ave) \ge \log\frac{1}{\ave}-\frac{H_2(\eta)}{\eta}+o(1)
	\end{equation}
	for all $\eta\in(0,1]$.
\end{proposition}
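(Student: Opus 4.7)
The plan is to decompose the raw key of Scheme~\ref{Scheme_S-3} into the frame-level key (Step~5, from $\tilde{\vect{B}}$) and the position-level key (Step~6, from $\{X_i\}$ in good frames), lower-bound each after privacy amplification, and add the two contributions.

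\textbf{Frame-level contribution.} Since $\eta_A=1$ and each slot's source count is Poisson with mean $\ave$, aggregating over a frame of length $b$ yields i.i.d.\ pairs $(\tilde{A}_i,\tilde{B}_i)$ following a Z-channel law with parameters $\tilde{q}=1-e^{-b\ave}$ and $\tilde{\mu}=(1-e^{-b\eta\ave})/(1-e^{-b\ave})$, i.e., \eqref{eq:PAB} with $\ave$ replaced by $b\ave$. The Slepian--Wolf message occupies $H(\tilde{B}|\tilde{A})$ nats/frame; exactly as in the proof of Proposition~\ref{prp:S-1}, Bob's direct-detection outcomes (and hence the coarser $\tilde{\vect{B}}$) are independent of Eve's post-measurement state, so by \eqref{eq:rennerkonig} privacy amplification on $\tilde{\vect{B}}$ extracts approximately $I(\tilde{A};\tilde{B})$ nats/frame. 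Applying \eqref{eq:S-1_last} with $\ave\to b\ave$ and dividing by $b$ yields the Part-1 per-source-use rate.

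\textbf{Position-level contribution.} In a good frame ($\tilde{B}_i=C_i=1$), Alice's unique active slot is the only slot in the frame in which the source emitted any photon, so all other slots are vacuum both at Bob and at Eve. Consequently Bob's detection must coincide with Alice's, giving $X_i=Y_i$ automatically and obviating any reconciliation on the positions. The good-frame probability is $b(1-e^{-\eta\ave})e^{-(b-1)\ave}=b\eta\ave(1+o(1))$, so there are $\eta\ave(1+o(1))$ good frames per source use, each contributing $\log b=\log(1/\ave)-\log\log(1/\ave)$ nats of raw key.

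\textbf{Leakage for Part~2.} Let $N$ be the source count in the active slot of a good frame; then $N\ge 1$ and $N_B\ge 1$, and Eve holds $N_E=N-N_B$ photons at position $X_i$ with vacuum elsewhere. Conditional on ``good'', the distribution of $N$ is concentrated on $n=1$ with mass $1-O(\ave)$; when $N=1$ we are forced to have $N_B=1$, hence $N_E=0$, and Eve's entire frame-state is exactly the all-vacuum state, which is independent of $X_i$. Only the event $N\ge 2$, of conditional probability $O(\ave)$, can leak information, and reveals at most $\log b$ nats; by an indicator-then-value decomposition (or equivalently a Holevo bound, which is valid because the conditional Eve-states differ from all-vacuum only on an $O(\ave)$-probability event), $I(X_i;\sigma^{\mathbb{E}^b}\mid\text{good})\le H_2(O(\ave))+O(\ave)\log b=O(\ave\log(1/\ave))$, which is $o(\ave)$ per source use.

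\textbf{Combining.} Summing Parts~1 and~2 per source use, subtracting the $o(\ave)$ leakage, and using $\log(1/(b\eta\ave))=\log\log(1/\ave)+\log(1/\eta)$ and $\log b=\log(1/\ave)-\log\log(1/\ave)$, the $\log\log(1/\ave)$ terms cancel; dividing by $\eta\ave$ gives $r_\textnormal{S-3}(\eta,\ave)\ge\log(1/\ave)+\log(1/\eta)+1-H_2(\eta)/\eta+o(1)$, which implies the proposition since $\log(1/\eta)+1\ge 0$ for $\eta\in(0,1]$. The main obstacle is the quantum leakage bound on Part~2---showing that no coherent measurement Eve performs across the frame can beat the classical estimate above---and the bookkeeping that the two privacy amplifications compose, which reduces to applying \eqref{eq:rennerkonig} to the joint raw key $(\tilde{\vect{B}},\{X_i\}_{\text{good}})$ conditioned on the public messages.
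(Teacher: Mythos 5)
Your overall route is the same as the paper's: split the key into the frame-indicator part and the within-frame position part, evaluate the first as $I(\tilde{A};\tilde{B})$ per frame for the Z-channel with parameter $b\ave$ (after paying $H(\tilde{B}|\tilde{A})$ for the Slepian--Wolf message and using the independence of Bob's detections from Eve's state), and evaluate the second as in Scheme~S-2 with a vanishing quantum leakage bound. Two points, however, need repair. First, you never account for the leakage through Alice's public message $M_A=\{C_i\}$. The first part of the key is $\tilde{\vect{B}}$ itself, and by \eqref{eq:rennerkonig} you must bound $H(\tilde{B}^\ell\,|\,M_A,M_B,\rho^{\mathbb{E}^{b\ell}})$, which costs an additional $H(M_A)$ beyond $H(M_B)$; the paper subtracts this explicitly. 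The crude bound of one bit per selected frame would cost up to a constant per detected photon, which is exactly the order you are fighting over, so this cannot be waved away; you need the observation that $C_i$ is heavily biased ($\Prv{C_i=0\,|\,\tilde{B}_i=1}=O(b\ave)=O(1/\log\nicefrac{1}{\ave})$), whence $H(M_A)$ is $o(1)$ nats per photon. Second, your Part-2 bookkeeping writes the good-frame probability as $b\eta\ave(1+o(1))$ and multiplies by $\log b$; but the $o(1)$ here is $-b\ave+O((b\ave)^2)=-\Theta(1/\log\nicefrac{1}{\ave})$, and multiplied by $\log b=\Theta(\log\nicefrac{1}{\ave})$ it contributes the constant $-1$ appearing in \eqref{eq:S-2}. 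Your final expression is therefore overstated by $1$: the correct total is $\log\nicefrac{1}{\ave}+\log\nicefrac{1}{\eta}-H_2(\eta)/\eta+o(1)$, not $\log\nicefrac{1}{\ave}+\log\nicefrac{1}{\eta}+1-H_2(\eta)/\eta+o(1)$. You are saved only because your residual slack $\log\nicefrac{1}{\eta}\ge 0$ still covers the proposition, but the arithmetic as written is wrong, and with the loose $M_A$ bound it would not be.

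On the positive side, your Part-2 leakage argument (Eve's frame state is exactly all-vacuum unless the active slot carried $N\ge 2$ source photons, an event of conditional probability $O(\ave)$) is a sound classical intuition, but to make it rigorous against an arbitrary collective measurement you should, as the paper does, bound $I(\tilde{X};\sigma^{\mathbb{E}^b})\le H(\sigma^{\mathbb{E}^b})\le b\,g\bigl((1-\eta)\ave/b\bigr)=o(1)$ using the maximum-entropy property of thermal states from \cite{holevosohmahirota99}, rather than an ``indicator-then-value'' decomposition whose quantum validity you assert but do not establish. Your observation that $X_i=Y_i$ automatically in good frames (so no reconciliation of positions is needed) is correct and matches the paper, as does the composition of the two privacy-amplification steps via the independence of the position key from the frame-label key.
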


The proof, which appears in Appendix~\ref{app:S-3}, evaluates the key rate that  Step 5) adds over the rate of Scheme~\ref{Scheme_S-2}. This part of the key consists of frame labels, thus it is obviously correlated with the messages sent over the public channel. However, we show that in the photon-efficient limit Eve must ``lose synchronization'' with the frame locations, thus the leakage is vanishing.

\subsection{Extension to the Case $\eta_A<1$}\label{sec:etaA}
The results for the case where $\eta_A$ in \eqref{eq:beamsplitterA} is equal to one can be extended to the case where $\eta_A<1$, though the expressions become considerably more cumbersome. We hence only give some heuristic explanations how our schemes should be modified, and how they perform. Note that for the following discussions the photon efficiency is defined in \eqref{eq:PE_etaAB}. Also recall that we assume that the source is co-located with Alice, such that the photons lost do not reach Eve.

\emph{Quantum Limit:} Proposition~\ref{prp:IAB} holds  but with a different constant term. The same proof ideas apply.

\emph{Direct Detection:} Scheme~\ref{Scheme_S-1} can be directly applied to the case where $\eta_A<1$ without modification, and its photon efficiency is different from the right-hand side of \eqref{eq:S-1_efficiency} by a constant term, i.e., it is again at most a constant away from the quantum limit.

\emph{Simple Frame-Parsing:} Scheme~\ref{Scheme_S-2} needs some modifications in order to work when $\eta_A<1$. First, in Step 2) Bob should select only those frames in which there is \emph{exactly} one source use where $B=1$. This is because there can be frames in which Bob has more detections than Alice, due to the loss to Alice. Second, after Step~3) Bob needs to send Alice a $b$-ary Slepian-Wolf code on his detection positions inside the selected frames, so that Alice will know these positions with high probability. (This is a large-alphabet code for symmetric errors, and is relatively easy to construct.) This step is needed because, since both Alice and Bob only observe lossy versions of the source, their detection positions inside the selected frames might be different. Indeed, the two positions are equal if they come from the same source photon-pair, and are independent of each other if they come from two different source photon-pairs. Finally, for Step~5) (privacy amplification), Eve's side information needs to be examined more carefully compared to the case where $\eta_A=1$. After these modifications, one can show that the photon efficiency is the same as the right-hand side of \eqref{eq:S-2} up to the second term, i.e., the loss in photon efficiency scales like $\log\log\nicefrac{1}{\ave}$.

\emph{Enhanced Frame-Parsing:} If we incorporate the aforementioned modifications for Scheme~\ref{Scheme_S-2} to Scheme~\ref{Scheme_S-3}, then Scheme~\ref{Scheme_S-3} also works for the case $\eta_A<1$, and its photon efficiency is different from the right-hand side of \eqref{eq:S-3} by a constant.

We finally note that, for all three cases in which we restrict Alice and Bob to using direct detection, we can also take \emph{detector dark counts} into account. Statistically, a dark count at Alice can be treated as a source photon-pair that reaches Alice but not Bob; similarly for a dark count at Bob. For example, when the dark-count rates at Alice and at Bob are $\lambda_A$ and $\lambda_B$ counts per slot, respectively, we can model the system by replacing $\eta_A$, $\eta_B$, and $\ave$ with $\eta_A'$, $\eta_B'$, and $\ave'$ that are solved from
\begin{subequations}
\begin{IEEEeqnarray}{rCl}
	\eta_A' \ave' & = & \eta_A \ave + \lambda_A\\
	\eta_B' \ave' & = & \eta_B \ave + \lambda_B\\
	\eta_A' \eta_B' \ave' & = & \eta_A \eta_B \ave
\end{IEEEeqnarray}
\end{subequations}
without introducing any new elements to the model. Note that this replacement of parameters yields the desired correlation between Alice's and Bob's photon counts, but does \emph{not} yield the correct form for Eve's optical states after Alice's and Bob's measurements. However, as our proofs show, information in Eve's optical states does not affect the dominant terms in secret-key rate in the regime of interest.
This observation combined with our results shows that dark counts only affect the constant term in photon efficiency, which is again similar to the previous results in optical communications~\cite{wangwornell14}.

%%%%%%%%%%%%%%%%%%%%%%%%%%%%%%%%%%

\section{Discussion: Toward Secrecy with a General Adversary}\label{sec:discussion}

In this work we have presented schemes that approach the optimal key rate in the photon-efficient limit, up to a constant efficiency loss. Moreover, these schemes are practical, both in the physical sense (utilizing realizable transmissions and measurements) and in the algorithmic sense (using simple protocols and off-the-shelf codes). However, throughout the work we have assumed that Eve is limited to passive eavesdropping through a beamsplitter channel. We now comment on the problems that may arise when this model does not hold, and point out ways to overcome them. 

First, suppose that Eve is still passive, but is free to change the beamsplitter transmissivity $\eta$ as a function of time, as long as it satisfies some average constraint $\bar \eta$. We now distinguish between two strategies that Eve can use:
\begin{enumerate}
\item Pre-scheduled transmissivity. Take, for example, Scheme~\ref{Scheme_C-2}, and imagine that for each PPM frame, Eve uses $\eta=0$ for half the block, and $\eta=1$ for the other half. Then she knows that the key pertaining to this frame must correspond to the part where $\eta=1$, gaining one bit per detected photon (thus reducing the key efficiency by $\log 2$). This kind of attack can go undetected, provided that Eve randomizes the schedule. However, it is plausible that the efficiency loss is bounded by a constant for any schedule.
\item Measurement-dependent transmissivity. In principle, Eve can change $\eta$ in a causal manner, based upon her measurement outcomes. However, we believe that the gain from using measurements can be shown to vanish in the photon-efficient limit, by the same techniques used to show that the information leakage is small.
\end{enumerate}
It however still remains to be investigated whether our intuitions above are correct, i.e., whether Eve indeed cannot gain from changing the beamsplitter transmissivity.

If Eve is allowed to transmit as well, other types of attacks are possible. A very simple and efficient one is ``intercept and resend'': Eve uses direct detection on the channel meant for Bob, and then upon detection of a photon, transmits a substitute one to Bob. This way Eve can obtain information about Bob's sequence of detections, and if she uses much higher bandwidth than Bob, the delay will not be detected. 

In fact, all QKD protocols face this problem. For example, in the BB84 protocol \cite{bennettbrassard84}, the key is generated using the polarization of a photon; Eve can make a measurement, then transmit to Bob a photon with the same polarization. The solution for BB84 is that Alice and Bob measure in either of two \emph{mutually unbiased} bases, according to local randomness. Only if they happened to measure in the same basis, the measurement results are used, inflicting a rate loss of factor~$2$. By sacrificing rate, they can now \emph{a posteriori} find out whether they used the same basis, and compare the correlation of the polarizations to the expected statistics, thus authenticating the received photons.

Extending this idea to schemes based on photon arrival times involves an extension of the concept of mutually unbiased bases to continuous variables; see \cite{weigertwilkinson08}. % http://pra.aps.org/abstract/PRA/v78/i2/e020303
Specifically, in Model C the modulation and measurements can be performed either in the time or in the frequency domain with the help of dispersive optics; see \cite{mowerzhang12}. Alternatively, in Model S, one can use interferometry to verify that the photons received by Alice and Bob are indeed entangled; see \cite{mowerwong11}.

%%%%%%%%%%%%%%%%%%%%%%%%%%%%%%%%%%

\appendix

\subsection{Proof of Proposition~\ref{prp:C-2}}\label{app:C-2}
  By the same argument as in the proof of
  Proposition~\ref{prp:C-1}, we know that the raw keys generated by Alice and Bob (before privacy amplification) are the same, and are independent of Bob's messages in the information-reconciliation step. It is, however, dependent on Eve's optical states. We thus need to determine how much secret key can be distilled from the raw key.
  
  The quantum states in different frames are mutually independent, so we need only to analyze one frame that is selected by Bob. We note that, when 
  Alice sends the coherent state $|\sqrt{b \ave}\ket$, 
  Eve's output
  state is $|\sqrt{(1-\eta)b \ave}\ket$, and is independent of
    Bob's measurement outcome conditional on Alice's input. Thus, using~\eqref{eq:rennerkonig}, we know
    that the number of secret nats we can obtain in each selected frame can
    be arbitrarily close to $H(\tilde{X}| \rho^{\mathbb{E}^b})$, where $\tilde{X}$ is uniformly distributed over $\{1,\ldots,b\}$, and where
    $\rho^{\mathbb{E}^b}$ is a $b$-mode bosonic state described as follows: conditional on
    $\tilde{X}=i$, $i\in\{1,\ldots,b\}$, $\rho^{\mathbb{E}^b}$
    has the
    coherent state $|\sqrt{(1-\eta)b \ave}\ket$ in the $i$th
    mode and has the vacuum state $|0\ket$ in all other modes. Note
    that  
    the total number of photons in $\rho^{\mathbb{E}^b}$ is $(1-\eta)
    b \ave$, so
    \begin{IEEEeqnarray}{rCl}
      H(\rho^{\mathbb{E}^b}) & \le & b \big\{\bigl(1+(1-\eta)\ave\bigr)
      \log \bigl(1+(1-\eta)\ave\bigr) \nonumber\\
	& & ~~~{}- (1-\eta)\ave\log
      \bigl((1-\eta)\ave\bigr)\bigr\}\label{eq:coherent_22}\\
      & = & \left\lceil \frac{1}{\ave\log\nicefrac{1}{\ave}}
        \right\rceil \left\{(1-\eta)\ave\log\frac{1}{\ave} + O(\ave) \right\}
      \\
      & = & (1-\eta)+o(1).
    \end{IEEEeqnarray}
    Here, \eqref{eq:coherent_22} follows from the well-known fact that
    the maximum entropy 
    of a $b$-mode bosonic state with a certain average photon number is achieved by the state consisting of $b$ i.i.d. thermal states
    \cite{holevosohmahirota99}. Now the number of secret nats per 
    selected frame satisfies
    \begin{IEEEeqnarray}{rCl}
      H(\tilde{X}| \rho^{\mathbb{E}^b}) & = & H(\tilde{X}) - I(\tilde{X};\rho^{\mathbb{E}^b}) \\
      & \ge & H(\tilde{X}) - H(\rho^{\mathbb{E}^b}) \\
      & = & \log b - H(\rho^{\mathbb{E}^b})\\
      & = & \log\frac{1}{\ave} - \log\log\frac{1}{\ave} - (1-\eta) + o(1).
    \end{IEEEeqnarray}

    We next consider the number of frames per $k$ channel uses that will be selected by Bob, which we denote by $N(k)$. When Alice sends $|\sqrt{b \ave}\ket$, Bob's output has
    a Poisson distribution of mean $\eta  b \ave$, so the
    probability that Bob detects at least one photon is $1-\e^{-\eta 
    b \ave}$. Hence, by the Law of Large Numbers,
    \begin{equation}\label{eq:C-2_1}
	\lim_{k\to\infty} \frac{N(k)}{k} =\lim_{k\to\infty} \frac{(1-\e^{-\eta  b \ave})\lceil \nicefrac{k}{b}\rceil }{k} = \frac{1-\e^{-\eta  b \ave}}{b}
    \end{equation}
    with probability one. Using
\begin{equation}
	\e^{-x} \le 1 - x + \frac{x^2}{2},\quad x\ge 0,
\end{equation}
the right-hand side of \eqref{eq:C-2_1} can be lower-bounded as
    \begin{equation}
      \frac{1-\e^{-\eta  b \ave}}{b} \ge \eta \ave \left(1 -
        \frac{\eta  b   \ave}{2}\right).
    \end{equation}
    The photon efficiency of the proposed scheme can now be
    lower-bounded as
    \begin{IEEEeqnarray}{rCl}
      r_{\textnormal{C-2}} & = & \frac{1}{\eta \ave} \cdot \frac{1-\e^{-\eta  b \ave}}{b} \cdot H(\tilde{X}| \rho^{\mathbb{E}^b})\\
      & \ge & \left(1 - \frac{\eta  b \ave}{2}\right)
      \left\{\log\frac{1}{\ave} - \log\log\frac{1}{\ave} 
        - (1-\eta) + o(1)\right\} \nonumber\\ {} \\
     & = & \left(1-\frac{\eta \left\lceil \frac{1}{\ave \log\nicefrac{1}{\ave}} \right\rceil  \ave}{2}  \right) \nonumber\\
	& & {}\cdot \left\{\log\frac{1}{\ave} - \log\log\frac{1}{\ave} 
        - (1-\eta) + o(1)\right\} \\
      & = & \log\frac{1}{\ave} - \log\log\frac{1}{\ave} - (1-\eta) + o(1),
    \end{IEEEeqnarray}
    which is as claimed.

%%%%%%%%%%%%%%%%%%%%%%%%%%%%%%%%%%%%%%%%%%%%%%%

%In this appendix we prove the results in Section~\ref{sec:source_case}. %We split the proof of Proposition~\ref{prp:IAB} into two sections. The proofs are arranged in the following order: the first part of Proposition~\ref{prp:IAB}, i.e., \eqref{eq:IAB} in Section~\ref{app:IAB}; Proposition~\ref{prp:S-1} in Section~\ref{app:S-1}; the second part of Proposition~\ref{prp:IAB}, i.e., \eqref{eq:efficiency_Squantum} in Section~\ref{app:efficiency_Squantum}; Proposition~\ref{prp:S-2} in Section~\ref{app:S-2}; and Proposition~\ref{prp:S-3} in Section~\ref{app:S-3}.

\subsection{Proof of Proposition~\ref{prp:S-2}}\label{app:S-2}

We first observe that, in every selected frame, the detection positions of Alice and Bob must be the same. This is because, due to \eqref{eq:PAB}, $B=1$ can happen only if $A=1$, and because by our choice each selected frame contains only one source use where $A=1$. We thus know that Alice's and Bob's raw keys are the same with probability one. 

To obtain the secret-key rate, we need to compute the entropy of the raw key conditional on Eve's observations. Note that the quantum states inside different frames are mutually independent. We consider one frame that is selected by Alice and Bob. Denote the detection position in the frame by $\tilde{X}$. It is clear that $\tilde{X}$ is uniformly distributed over $\{1,\ldots,b\}$ and is independent of the label of this frame. All Eve's information about $\tilde{X}$ is in her optical state from the $b$ source uses that form this frame: if the source use where $A=B=1$ contains more than one photons, then Eve could also detect a photon in this source use, hence knowing Alice's and Bob's detection position. But, as we next show, this information leakage is small. To this end, we first note that in source uses where $A=B=0$, Eve's optical state is vacuum. Indeed, according to our source model, the number of photons in Alice's state equals the sum of the numbers of photons in Bob's and Eve's states with probability one. Therefore, when both Alice and Bob make direct detections on a source use and observe no photon, Eve's post-measurement state in the same source use becomes the vacuum state. In the (unique) source use where $A=B=1$, Eve's post-measurement state is the same as her state \emph{without} the condition $A=B=1$ given in \eqref{eq:sigmaE}. This is because $A=B=1$ means nothing but that Bob's photon number is positive, but Eve's post-measurement state is independent of Bob's photon number, as shown in Section~\ref{sec:key_source_setup}. Denote Eve's state over the whole frame by $\sigma^{\mathbb{E}^b}$. We now know that it consists of $b-1$ vacuum states and one state of the form \eqref{eq:sigmaE} whose position inside the frame is random. The expected number of photons in $\sigma^{\mathbb{E}^b}$ is $(1-\eta)\ave$, so the entropy of $\sigma^{\mathbb{E}^b}$ is upper-bounded by \cite{holevosohmahirota99}
\begin{equation}
	H(\sigma^{\mathbb{E}^b}) \le b\cdot g\left(\frac{(1-\eta)\ave}{b}\right) = o(1).
\end{equation}
Thus the amount of secret information extractable from one selected frame is lower-bounded by
\begin{IEEEeqnarray}{rCl}
	H(\tilde{X}| \sigma^{\mathbb{E}^b}) & = & H(\tilde{X}) - I(\tilde{X};\sigma^{\mathbb{E}^b})\\
	& \ge & H(\tilde{X}) - H(\sigma^{\mathbb{E}^b})\\
	& \ge & \log \left\lceil \frac{1}{\ave\log\nicefrac{1}{\ave}} \right\rceil +o(1)\\
	& = & \log\frac{1}{\ave} - \log\log\frac{1}{\ave} + o(1).
\end{IEEEeqnarray}

It now remains to compute the probability that a specific frame will be selected by Alice and Bob. A simple lower bound on the probability of a frame being selected is the following: suppose both Bob and Eve make PNR direct detections on their states, then a frame is selected by Alice and Bob if (but not only if) Bob detects exactly one photon in the frame while Eve detects no photon. Bob's photon number has a Poisson distribution of mean $\eta  b \ave$, while Eve's photon number has a Poisson distribution of mean $b(1-\eta)\ave$, and the two photon numbers are independent. Hence the probability a frame being selected is lower-bounded by
\begin{equation}\label{eq:probselect}
	\left(\eta  b \ave  \e^{-\eta  b \ave}\right)\cdot \left(\e^{-b(1-\eta)\ave}\right) = \eta  b \ave - \eta  b^2 \ave^2 + o\left(\frac{1}{\log\nicefrac{1}{\ave}}\right).
\end{equation}
Multiplying \eqref{eq:probselect} with $H(\tilde{X}| \sigma^{\mathbb{E}^b})$ gives us the secret-key nats per frame, where we count both selected and unselected frames. Simple normalization then yields the photon efficiency
\begin{IEEEeqnarray}{rCl}
	r_\textnormal{S-2}(\eta,\ave)
	 & \ge & \frac{1}{{\eta  b \ave}} \cdot \left(\eta  b \ave - \eta  b^2 \ave^2 + o\left(\frac{1}{\log\nicefrac{1}{\ave}}\right)\right) \nonumber\\
	& & {} \cdot \left(\log\frac{1}{\ave} - \log\log\frac{1}{\ave} + o(1)\right)\\
	& = & \log\frac{1}{\ave} - \log\log\frac{1}{\ave} - 1 + o(1).
\end{IEEEeqnarray}

\subsection{Proof of Proposition~\ref{prp:S-3}}\label{app:S-3}

The second part of the secret key, which is generated in Step 5) in Scheme~\ref{Scheme_S-3}, is exactly the (whole) secret  key generated by Scheme~\ref{Scheme_S-2}, and hence contributes to the total photon efficiency by the right-hand side of~\eqref{eq:S-2}. It is clear that this is independent of the first part of the key, as the latter only contains information of the frame labels. We thus only need to evaluate the contribution to the photon efficiency from the first part of the key which is generated in Step~5).

Consider a block of $\ell$ length-$b$ frames. To compute the length of the first part of the key that can be obtained from these frames, we first consider the information leakage due to Bob's message to Alice. Note that $(\tilde{A}^\ell,\tilde{B}^\ell)$ is distributed i.i.d. in time, where
each pair $(A,B)$ has joint distribution according to a Z channel with 
\begin{subequations}\label{eq:PABtilde}
\begin{IEEEeqnarray}{rCl}
	\tilde q &\triangleq& P_A(1)  =  1 - \e^{-b \ave} \\
	\tilde \mu & \triangleq& P_{B|A}(1|1)  =  \frac{1-\e^{-\eta  b \ave}}{1-\e^{-b \ave}}.
\end{IEEEeqnarray}
\end{subequations}

The optimal Slepian-Wolf code for Bob to convey $\tilde{B}^{\ell}$ to Alice should contain, asymptotically, $H(\tilde{B}|\tilde{A})$ nats per frame \cite{slepianwolf73}. Let $M_B$ be the message which Bob sends to Alice, then
\begin{equation}
	H(M_B) = \ell  H(\tilde{B}|\tilde{A}) + \ell \epsilon
\end{equation}
where $\epsilon$ tends to zero as $\ell$ tends to infinity.

We next bound the information leakage due to the message which Alice sends to Bob. A simple upper bound is: for each frame where $\tilde{B}=1$, Alice needs to send Bob at most one bit. From \eqref{eq:PABtilde} we can obtain
\begin{equation}\label{eq:S-3_1}
	P_{\tilde{B}}(1) = 1-\e^{-\eta  b \ave}.
\end{equation}
Let $M_A$ be the message which Bob sends to Alice for $\ell$ frames, then
\begin{equation}\label{eq:S-3_2}
	H(M_A) \le \ell \left(1-\e^{-\eta  b \ave}\right) + \ell \epsilon.
\end{equation}

We finally consider Eve's quantum state from the optical channel. Denote this state over $\ell$ frames by $\rho^{\mathbb{E}^{b\ell}}$. Since, as shown in Section~\ref{sec:key_source_setup}, it is independent of Bob's measurement (direct detection) outcomes, and since $\tilde{B}$ is a function of Bob's measurement outcomes, we know that
\begin{equation}\label{eq:S-3_3}
	I\left(\tilde{B}^{\ell}; \rho^{\mathbb{E}^{b\ell}} \right) = 0.
\end{equation}

We now use \eqref{eq:S-3_1}, \eqref{eq:S-3_2} and \eqref{eq:S-3_3} to bound the length of the first part of the key for $\ell$ frames which, according to \eqref{eq:rennerkonig}, is given by
\begin{IEEEeqnarray}{rCl}
	\lefteqn{H(\tilde{B}^{\ell}| M_A,M_B, \rho^{\mathbb{E}^{b\ell}})}~~\nonumber\\
	 & = & H(\tilde{B}^{\ell}) - \underbrace{I(\tilde{B}^{\ell}; \rho^{\mathbb{E}^{b\ell}})}_{=0} - \underbrace{I(M_A,M_B; \tilde{B}^{b\ell}| \rho^{\mathbb{E}^{b\ell}})}_{\le H(M_A) + H(M_B)} \IEEEeqnarraynumspace\\
	& \ge & \underbrace{H(\tilde{B}^\ell)}_{=\ell  H(\tilde{B})} - \underbrace{H(M_A)}_{\le \ell(1-\e^{-\eta  b \ave}) + \ell \epsilon} - \underbrace{H(M_B)}_{=\ell  H(\tilde{B}|\tilde{A}) + \ell \epsilon}\\
	& \ge & \ell  H(\tilde{B}) - \ell \left(1-\e^{-\eta  b \ave}\right) - \ell  H(\tilde{B}|\tilde{A}) - 2  \ell \epsilon\\
	& = & \ell  I(\tilde{A};\tilde{B}) + \ell  \eta  b \ave - 2  \ell \epsilon + o(\ave).
\end{IEEEeqnarray}
Hence, for large enough $\ell$, the length of the first part of the key \emph{per frame} is given by
\begin{equation}\label{eq:S-3_4}
	I(\tilde{A};\tilde{B}) + \ell  \eta  b \ave + o(\ave).
\end{equation}
We next evaluate $I(\tilde{A};\tilde{B})$. Comparing the parameters \eqref{eq:PABtilde} to \eqref{eq:PAB}, we see that $I(\tilde{A};\tilde{B})$ is the same as $I(A;B)$ \eqref{eq:S-1_last}, replacing $\ave$ with $b \ave$, where, recalling \eqref{eq:blocklog}, 
\begin{equation}
	b \ave = \frac{1}{\log\nicefrac{1}{\ave}} + o(\ave).
\end{equation}
Thus,
\begin{IEEEeqnarray}{rCl}
	I(\tilde{A};\tilde{B}) & = & H_2(\e^{-\eta  b \ave}) - \left(1-\e^{-b \ave}\right) H_2\left(\frac{1-\e^{-\eta  b \ave}}{1-\e^{-b \ave}}\right) \IEEEeqnarraynumspace\\
	& = & \eta  b \ave \log \log \frac{1}{\ave} + \eta  b \ave-b \ave H_2(\eta)\nonumber\\
	& & {} +o\left(\frac{1}{\log\nicefrac{1}{\ave}}\right).\label{eq:S-3_5}
\end{IEEEeqnarray}
We can now compute the photon efficiency coming from the first part of the secret key in Scheme~\ref{Scheme_S-3} by dividing \eqref{eq:S-3_4} by $\eta  b \ave$ (the average number of photons Bob detects per frame), and by using \eqref{eq:S-3_5}. This photon efficiency is at least
\begin{equation} \label{eq:S-3_6}
	\log\log\frac{1}{\ave} + 1 - \frac{H_2(\eta)}{\eta} + o(1).
\end{equation}
Adding \eqref{eq:S-3_6} to the right-hand side of \eqref{eq:S-2}, i.e., to the photon efficiency coming from the second part of the secret key, we conclude that
\begin{equation}
	r_\textnormal{S-3} (\eta,\ave) \ge \log\frac{1}{\ave}  - \frac{H_2(\eta)}{\eta} + o(1).
\end{equation}

\section*{Acknowledgments}
The authors thank Nivedita Chandrasekaran and Jeffrey Shapiro for helpful discussions, and the anonimous reviewers for useful comments.

%%%%%%%%%%%%%%%%%%%%%%%%%%%%%%%%%%%%%%%%%%%%%%
\bibliographystyle{IEEEtran}           % in order of first citation
\bibliography{/Volumes/Data/wang/Library/texmf/tex/bibtex/header_short,/Volumes/Data/wang/Library/texmf/tex/bibtex/bibliofile}

\begin{thebibliography}{10}

\bibitem{ahlswedecsiszar93}
R.~Ahlswede and I.~Csisz\'ar, ``Common randomness in information theory and
  cryptography---part {I}: secret sharing,'' {\em IEEE Trans. Inform. Theory},
  vol.~39, pp.~1121--1132, July 1993.

\bibitem{maurer93}
U.~Maurer, ``Secret key agreement by public discussion from common
  information,'' {\em IEEE Trans. Inform. Theory}, vol.~39, pp.~733--742, May
  1993.

\bibitem{wyner75}
A.~D. Wyner, ``The wiretap channel,'' {\em Bell System Techn. J.}, vol.~54,
  pp.~1355--1387, 1975.

\bibitem{bennettbrassard84}
C.~H. Bennett and G.~Brassard, ``Quantum cryptography: public-key distribution
  and coin tossing,'' in {\em Proceedings IEEE International Conference on
  Computers, Systems and Signal Processing}, (Bangalore, India), pp.~175--179,
  Dec. 1984.

\bibitem{ekert91}
A.~K. Ekert, ``Quantum cryptography based on {B}ell's theorem,'' {\em Phys.
  Rev. Lett.}, vol.~67, pp.~661--663, Aug. 1991.

\bibitem{wootterszurek82}
W.~K. Wootters and W.~H. Zurek, ``A single quantum cannot be cloned,'' {\em
  Nature}, vol.~3, pp.~802--803, 1982.

\bibitem{shapiro09}
J.~H. Shapiro, ``The quantum theory of optical communications,'' {\em IEEE J.
  Select. Top. Quantum Electron.}, vol.~15, no.~6, pp.~1547--1569, 2009.

\bibitem{bardavid69}
I.~Bar-David, ``Communicaiton under the {P}oisson regime,'' {\em IEEE Trans.
  Inform. Theory}, vol.~15, pp.~31--37, Jan. 1969.

\bibitem{kabanov78}
Y.~Kabanov, ``The capacity of a channel of the {P}oisson type,'' {\em Theory of
  Probability and Its Appl.}, vol.~23, pp.~143--147, 1978.

\bibitem{davis80}
M.~H.~A. Davis, ``Capacity and cutoff rate for {P}oisson-type channels,'' {\em
  IEEE Trans. Inform. Theory}, vol.~26, pp.~710--715, Nov. 1980.

\bibitem{nielsenchuang00}
M.~A. Nielsen and I.~L. Chuang, {\em Quantum Computation and quantum
  information}.
\newblock Cambridge University Press, 2000.

\bibitem{mandelwolf95}
L.~Mandel and E.~Wolf, {\em Optical Coherence and Quantum Optics}.
\newblock Cambridge University Press, 1995.

\bibitem{rennerkonig05}
R.~Renner and R.~K\"onig, ``Universally composable privacy amplification
  against quantum adversaries,'' in {\em Theory of Cryptography} (J.~Kilian,
  ed.), pp.~407--425, Springer Verlag, 2005.

\bibitem{zhongwong12}
T.~Zhong, F.~N.~C. Wong, A.~Restelli, and J.~C. Bienfang, ``Efficient
  single-spatial-mode periodically-poled {KTiOPO4} waveguide source for
  high-dimensional entanglement-based quantum key distribution,'' {\em Opt.
  Express}, vol.~20, pp.~26868--26877, Nov. 2012.

\bibitem{kochmanwornell12}
Y.~Kochman and G.~W. Wornell, ``On high-efficiency optical communication and
  key distribution,'' in {\em Proc. Inf. Theory and Appl. Workshop}, (San
  Diego, CA, USA), Feb. 5--10 2012.

\bibitem{erkmen12}
B.~Erkmen, B.~Moision, S.~J. Dolinar, K.~M. Birnbaum, and D.~Divsalar,
  ``Approaching the ultimate limits of communication efficiency with a
  photon-counting detector,'' in {\em Proc. Inf. Theory and Appl. Workshop},
  (San Diego, CA, USA), Feb. 5--10 2012.

\bibitem{wangwornell14}
L.~Wang and G.~W. Wornell, ``A refined analysis of the {P}oisson channel in the
  high-photon-efficiency regime.'' To app. in \it IEEE Trans. Inform.
  Theory\rm, 2014, arXiv:1401.5767.

\bibitem{holevo98}
A.~S. Holevo, ``The capacity of the quantum channel with general signal
  states,'' {\em IEEE Trans. Inform. Theory}, vol.~44, pp.~269--273, Jan. 1998.

\bibitem{schumacherwestmoreland97}
B.~Schumacher and M.~D. Westmoreland, ``Sending classical information via noisy
  quantum channels,'' {\em Phys. Rev. A}, vol.~56, pp.~131--138, 1997.

\bibitem{giovannettiguha04}
V.~Giovannetti, S.~Guha, S.~Lloyd, L.~Maccone, J.~H. Shapiro, and H.~P. Yuen,
  ``Classical capacity of the lossy bosonic channel: the exact solution,'' {\em
  Phys. Rev. Lett.}, vol.~92, no.~2, p.~027902(4), 2004.

\bibitem{gallager87}
R.~G. Gallager, ``Energy limited channels: Coding, multiaccess, and spread
  spectrum,'' Tech. Rep. 1714, Laboratory for Information and Decision Systems,
  MIT, Nov. 1987.

\bibitem{verdu90}
S.~Verd\'u, ``On channel capacity per unit cost,'' {\em IEEE Trans. Inform.
  Theory}, vol.~36, pp.~1019--1030, Sept. 1990.

\bibitem{shannon48}
C.~E. Shannon, ``A mathematical theory of communication,'' {\em Bell System
  Techn. J.}, vol.~27, pp.~379--423 and 623--656, July and Oct. 1948.

\bibitem{chungguhazheng11}
H.~W. Chung, S.~Guha, and L.~Zheng, ``On capacity of optical channels with
  coherent detection,'' in {\em Proc. IEEE Int. Symp. Inform. Theory}, (Saint
  Petersburg, Russia), July 31--August 5 2011.

\bibitem{pierce78}
J.~Pierce, ``Optical channels: practical limits with photon counting,'' {\em
  IEEE Trans. Commun.}, vol.~26, pp.~1819--1821, Dec. 1978.

\bibitem{massey81}
J.~L. Massey, ``Capacity, cutoff rate, and coding for direct-detection optical
  channel,'' {\em IEEE Trans. Commun.}, vol.~29, pp.~1615--1621, Nov. 1981.

\bibitem{wilmink03}
R.~Wilmink, {\em Quantum Broadcast Channels and Cryptographic Applications for
  Separable States}.
\newblock {Ph.D.} dissertation, University of Bielefeld, 2003.

\bibitem{slepianwolf73}
D.~S. Slepian and J.~K. Wolf, ``Noiseless coding of correlated information
  sources,'' {\em IEEE Trans. Inform. Theory}, vol.~19, pp.~471--480, July
  1973.

\bibitem{zhouwornell13}
H.~Zhou and G.~W. Wornell, ``Adaptive pulse-position modulation for
  high-dimensional quantum key distribution,'' in {\em Proc. IEEE Int. Symp.
  Inform. Theory}, (Istanbul, Turkey), July 10--15 2013.

\bibitem{weigertwilkinson08}
S.~Weigert and M.~Wilkinson, ``Mutually unbiased bases for continuous
  variables,'' {\em Phys. Rev. A}, vol.~78, p.~020303(R), 2008.

\bibitem{mowerzhang12}
J.~Mower, Z.~Zhang, P.~Desjardins, C.~Lee, J.~H. Shapiro, and D.~Englund,
  ``High-dimensional quantum key distribution using dispersive optics.'' 2012,
  arXiv:1210.4501.

\bibitem{mowerwong11}
J.~Mower, F.~N.~C. Wong, J.~H. Shapiro, and D.~Englund, ``Sending classical
  information via noisy quantum channels.'' 2011, arXiv:1110.4867.

\bibitem{holevosohmahirota99}
A.~S. Holevo, M.~Sohma, and O.~Hirota, ``Capacity of quantum {G}aussian
  channels,'' {\em Phys. Rev. A}, vol.~59, no.~3, pp.~1820--1828, 1999.

\end{thebibliography}

\end{document}